\newcommand{\Tabs}{9:    \=11\=11\=11\=11\=11\=11\=11\kill}
\newcommand{\ceil}[1]{\lceil #1 \rceil}
\newcommand{\Ftwo}{\mathbb{F}_{2}}
\newenvironment{theoremR}[1]
  {\def\repthmref{#1}\theoremRinner (restated)}{\endtheoremRinner}
\newtheorem{theorem}{Theorem}[section]
\newtheorem{lemma}[theorem]{Lemma}
\newtheorem{proposition}[theorem]{Proposition}
\newtheorem{definition}[theorem]{Definition}
\renewcommand{\paragraph}[1]{\vspace{0.15cm}\noindent {\bf #1}}
\newcommand{\IncludePictures}[1]{}
\newcommand{\withCollision}[1]{}
\newcommand{\FullOrShort}{full}
	  \newcommand{\fullOnly}[1]{#1}
	  \newcommand{\shortOnly}[1]{}
	  \newcommand{\fullOnly}[1]{}
	  \newcommand{\shortOnly}[1]{#1}
\begin{document}

\date{}

\author{Mohsen Ghaffari\\ \texttt{ghaffari@mit.edu}\\ MIT \and Bernhard Haeupler\\ \texttt{haeupler@cs.cmu.edu}\\ Microsoft Research \and Majid Khabbazian \\ \texttt{mkhabbazian@ualberta.ca}\\ University of Alberta}


\title{Randomized Broadcast in Radio Networks \\with Collision Detection}
\maketitle

\begin{abstract}
We present a randomized distributed algorithm that in radio networks with collision detection broadcasts a single message in $O(D + \log^6 n)$ rounds, with high probability. This time complexity is most interesting because of its optimal additive dependence on the network diameter $D$. It improves over the currently best known $O(D\log\frac{n}{D}\,+\,\log^2 n)$ algorithms, due to Czumaj and Rytter [FOCS 2003], and Kowalski and Pelc [PODC 2003]. These algorithms where designed for the model without collision detection and are optimal in that model. However, as explicitly stated by Peleg in his 2007 survey on broadcast in radio networks, it had remained an open question whether the bound can be improved with collision detection. 

\smallskip

We also study distributed algorithms for broadcasting $k$ messages from a single source to all nodes. This problem is a natural and important generalization of the single-message broadcast problem, but is in fact considerably more challenging and less understood. We show the following results: If the network topology is known to all nodes, then a $k$-message broadcast can be performed in $O(D + k\log n + \log^2 n)$ rounds, with high probability. If the topology is not known, but collision detection is available, then a $k$-message broadcast can be performed in $O(D + k\log n + \log^6 n)$ rounds, with high probability. The first bound is optimal and the second is optimal modulo the additive $O(\log^6 n)$ term. 
\end{abstract}

%
%


\section{Introduction}
\renewcommand{\thefootnote}{\fnsymbol{footnote}}
\setcounter{footnote}{1}
\footnotetext{
The research in this paper was supported by AFOSR award No. FA9550-13-1-0042, NSF grant Nos. CCF-AF-0937274, CNS-1035199, 0939370-CCF, CCF-1217506,
and NSF-PURDUE-STC award 0939370-CCF.}
\setcounter{footnote}{0}
\renewcommand{\thefootnote}{\arabic{footnote}}
The classical information dissemination problem in radio networks is the problem of broadcasting a single message to all nodes of the network (single-message broadcast). This problem and its generalizations have received extensive attention.

A characteristic of radio networks is that multiple messages that arrive at a node simultaneously interfere (collide) with one another and none of them is received successfully. Regarding whether nodes can distinguish such a collision from complete silence, the model is usually divided into two categories of with and without collision detection. Throughout studies of problems in radio networks, it has been observed that many problems can be solved faster in the model with collision detection~\cite{SR10}. Despite this trend, it had remained unclear whether this is also the case for broadcast or not~\cite{Peleg07}. 
 We show that single-message broadcast can be indeed solved faster, in simply diameter plus poly-logarithmic time, if collision detection is available. Even though our work is theoretical, we remark that most practical radio networks can detect collisions. 

Broadcasting $k$ messages from one node to all nodes is a natural and important generalization of the single-message broadcast problem. Usually, this generalization involves new and significantly different challenges, mainly because the dissemination of different messages can interfere with each other. We show how to overcome these challenges and obtain an (almost) optimal $k$-message broadcast algorithm. 


\subsection{Model and Problem Statements} \label{subsec:model&prob}
We work in the \emph{radio network model with collision detection}~\cite{CK}: a synchronous network $G=(V,E)$ where in each round, each node either transmits a packet with $B$ bits or listens. As a standard assumption, to ensure that each packet can contain a constant number of ids, we assume that $B=\Omega(\log n)$. Each node $v$ receives a packet from its neighbors only if it listens in that round and exactly one of its neighbors is transmitting. If two or more neighbors of $v$ transmit, then $v$ only detects the collision, which is modeled as $v$ receiving a special symbol $\top$ indicating a collision. We explain that some of our results hold even in the model \emph{without collision detection}, where if two or more neighbors of $v$ transmit, then $v$ does not receive anything.

The single-message broadcast problem is defined as follows: A single \emph{source} node has a single message of length at most $\Theta(B)$ bits and the goal is to deliver this message to all nodes in the network. The $k$-message single-source broadcast problem is defined similarly, with the difference that the source has $k$ messages which need to be delivered to all other nodes. 
We focus on randomized solutions to these problems where we require that the message(s) are delivered to all nodes with high probability\footnote{We use the phrase ``high probability" to indicate a probability at least $1- \frac{1}{n^c}$, for a constant $c\geq 1$, and where $n$ is the network size.}. In the unknown topology setting (which is our default setting), we assume\footnote{It is easy to see that the latter assumption can be removed without any change is our time-bounds, by finding a 2-approximation of $D$ in time $O(D)$, using the beep waves tool of~\cite{SODA-LE}.} that nodes know a polynomial upper bound on $n$ and a constant factor upper bound on diameter $D$. In the known topology setting, similar to~\cite{GPX05}, we assume that nodes know the whole graph.

\subsection{Our Results}\label{subsec:result}
Our main results are as follows:
\begin{theorem}\label{thm:singlebcast}
In radio networks with unknown topology and with collision detection, there is a randomized distributed algorithm that broadcasts a single message in $O(D + \log^6 n)$ rounds, with high probability.
\end{theorem}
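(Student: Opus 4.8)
The plan is to organize the network into breadth-first-search layers $L_0, L_1, \dots, L_D$ relative to the source, where $L_i$ is the set of nodes at distance exactly $i$, and to argue that an \emph{informed frontier} can be made to advance by (essentially) one layer per round. If collisions never occurred this would give a clean $O(D)$ bound, so the entire difficulty is contention: when several already-informed nodes of $L_i$ simultaneously transmit to a common uninformed neighbor in $L_{i+1}$, the messages collide and nothing is delivered. The classical remedy is the decay protocol, in which informed nodes transmit with geometrically decreasing probabilities $2^{-j}$; this guarantees successful delivery across one cut within $O(\log n)$ rounds w.h.p., but applying it cut-by-cut costs $\Theta(D\log n)$, which is exactly the bound we must beat.

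The key leverage I would exploit from collision detection is \emph{precise frontier detection}. Without collision detection an uninformed node hearing silence cannot distinguish ``no neighbor knows the message yet'' from ``neighbors know it but collided''; with collision detection, silence unambiguously means the message has not yet reached its neighborhood, and the first non-silent round a node hears tells it exactly when the wave has arrived. I would use this to let each node \emph{lock onto} the advancing wave at the correct time and to implement cheap acknowledgments: once a node is informed it can signal success back over the cut, and neighbors that have already passed the message forward can lower their transmission probability so they stop clogging the active frontier. This turns the per-cut decay into a pipelined process in which the $O(\log n)$-round contention resolution for consecutive cuts overlaps rather than being paid serially.

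First I would establish a core primitive, a fast broadcast wave, and prove a progress lemma of the form: after $t$ rounds every node within distance $t - g(n)$ of the source is informed w.h.p., where $g(n)$ is polylogarithmic. The heart of the argument is a potential/charging analysis showing that the total number of ``wasted'' rounds spent resolving collisions, summed over the whole execution rather than per layer, is only $\mathrm{poly}(\log n)$: each node should be charged for only polylogarithmically many decay rounds before it is informed and before it ceases to interfere, after which the frontier moves on. I expect the main obstacle to be exactly this amortization in the presence of spatial interference: decays running at geometrically nearby cuts are not independent, so I would need to carefully separate the message-content transmissions from the coordination and acknowledgment signals (e.g. by interleaving them in disjoint sets of rounds) and argue that interference between non-adjacent active regions is negligible while adjacent regions are handled by the locking mechanism.

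Finally, to realize the pipelining cleanly I anticipate needing a preprocessing phase that builds a sparse backbone (a bounded-density, distance-preserving dominating structure) so that the number of simultaneously transmitting informed nodes across any cut is controlled; constructing such a structure distributedly under unknown topology with collision detection is itself a nontrivial randomized subroutine, and I expect its cost to contribute the additive $O(\log^6 n)$ term, arising from several nested logarithmic factors due to contention resolution, repetition for high-probability correctness, and the recursion depth of the construction. Combining the $O(D)$ wave-propagation time with the $O(g(n))$ slack and the preprocessing overhead would then yield the claimed $O(D + \log^6 n)$ bound.
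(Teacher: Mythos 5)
Your proposal correctly identifies the one place where collision detection is genuinely exploited in the paper --- an uninformed node can treat the \emph{first} non-silent round (message or collision) as the arrival time of the wave, which yields an exact BFS layering in $D$ rounds --- but the rest of the argument has a genuine gap at its crux. The claim you need is that the total contention-resolution cost can be amortized to $\operatorname{polylog}(n)$ over the whole execution, and your charging argument is per-node where it would have to be per-path: even if every node is charged only $O(\log n)$ ``wasted'' rounds, a source-to-$v$ path of length $D$ can incur a \emph{fresh} $\Theta(\log n)$-round Decay at every one of its $D$ hops, because the node at layer $i+1$ may have many informed layer-$i$ neighbors and must search over transmission probabilities to isolate one. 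Acknowledgments and probability-lowering do not remove this: the $\Theta(\log n)$ cost is paid \emph{before} the crossing succeeds, so it cannot be cancelled retroactively, and a bounded-density dominating backbone does not control how many backbone nodes are adjacent to a single receiver in a general graph. The paper escapes this with a structural object, the Gathering Spanning Tree of Gasieniec et al.: its collision-freeness property makes the ``fast stretches'' deterministically collision-free (one hop per constant number of rounds, no Decay needed), and the rank argument guarantees that any root-to-node path contains at most $\lceil \log_2 n\rceil$ stretch boundaries, so only $O(\log n)$ Decay invocations of $O(\log n)$ rounds each are ever paid on any path, giving $O(D+\log^2 n)$ broadcast atop the tree. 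This is the missing progress lemma, and it is obtained by construction, not by amortization.

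There is a second missing idea even if one grants such a preprocessing structure: building it within budget. The paper's distributed GST construction costs $O(D\log^4 n)$ rounds, which by itself exceeds $O(D+\log^6 n)$. The fix is to decompose the BFS layering into $O(\log^4 n)$ rings of $D'=D/\log^4 n$ consecutive layers, build a GST in every ring \emph{in parallel} (cost $O(D'\log^4 n)=O(D)$ total), broadcast inside each ring in $O(D'+\log^2 n)$ rounds, and stitch consecutive rings with $O(\log^2 n)$ rounds of Decay; the $O(\log^6 n)$ term is exactly $O(\log^2 n)$ per ring times $O(\log^4 n)$ rings, not an artifact of nested repetitions in a backbone construction. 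Your proposal contains neither the ring decomposition nor any substitute for it, so even with a correct per-ring primitive the preprocessing would blow the $O(D)$ additive budget.
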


\begin{theorem}\label{thm:multipleBcastKnown}
In radio networks with known topology (even without collision detection), there is a randomized distributed algorithm that broadcasts $k$ messages in $O(D + k \log n + \log^2 n)$ rounds, with high probability. 
\end{theorem}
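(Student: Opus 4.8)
The plan is to reduce the $k$-message broadcast to a single, carefully \emph{pipelined} schedule built on a BFS tree. First I would root a BFS tree $T$ at the source and partition the vertices into layers $L_0,\dots,L_D$ by distance from the source; since the topology is globally known, every node can locally compute $T$, its own layer, and any schedule derived from $T$ without any communication. The single-message case is already handled by the known-topology broadcast schedule of Gasieniec, Peleg, and Xin~\cite{GPX05}, which delivers one message in $O(D+\log^2 n)$ rounds. The whole task is therefore to push $k$ messages through a structure of this length while paying only an additive $O(k\log n)$ for the extra messages, rather than multiplying the $\log^2 n$ factor by $k$.

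The building block is a collision-resolving \emph{one-layer push}: assuming every node of $L_i$ already holds a message $m$, I want every node of $L_{i+1}$ to receive $m$. In the known-topology setting this is exactly a selective-family problem — each node of $L_{i+1}$ must have some round in which exactly one of its neighbors in $L_i$ transmits — and it can be solved in $O(\log n)$ rounds, either by a strongly selective family or, with randomization, by having each node of $L_i$ transmit in each of $O(\log n)$ slots with a suitably tuned probability, so that every listener is isolated in some slot with high probability. Stacking these pushes across the $D$ layers would give $O(D\log n)$ for a single message, which is too slow; the role of the construction of~\cite{GPX05} is precisely to compress this to $O(D+\log^2 n)$ by overlapping the per-layer conflict resolution across layers.

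To obtain the $k$-message bound I would \emph{pipeline}: inject message $j$ into the source's schedule at round $(j-1)\cdot\Theta(\log n)$ and run $k$ time-shifted copies of the single-message schedule concurrently. If message $j$ finishes by round $O(D+\log^2 n)+(j-1)\cdot\Theta(\log n)$, then all $k$ messages are delivered by round $O(D+\log^2 n)+(k-1)\cdot\Theta(\log n)=O(D+k\log n+\log^2 n)$, as required. For correctness I must ensure that the concurrently moving wavefronts do not destructively interfere: two messages whose wavefronts currently sit at layers $i$ and $i'$ can reach a common listener only if $|i-i'|=O(1)$, so a gap of $\Theta(\log n)$ rounds between injections keeps any two active wavefronts separated by enough layers that their transmitter sets lie in disjoint neighborhoods, while within a single wavefront the selective-family slots already resolve the local conflicts.

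The main obstacle is exactly this non-interference argument, together with keeping the injection gap at $\Theta(\log n)$ rather than larger. Two difficulties must be reconciled. First, the schedule of~\cite{GPX05} compresses layers, so the transmissions for a single message are \emph{not} confined to one layer at a time; the clean picture of ``far-apart wavefronts'' must therefore be replaced by an analysis of the actual transmission pattern, showing that shifting the schedule by $\Theta(\log n)$ rounds keeps distinct messages' transmissions conflict-free at every listener — equivalently, that the \emph{congestion} of the schedule at any interference class is $O(\log n)$ per message. Second, when randomization is used for the per-slot isolation, the success of all $O(kn)$ relevant isolation events must be established by a union bound, which is where the $\Theta(\log n)$ slot count and the high-probability guarantee enter (and where one would assume $k=\poly(n)$ to absorb the failure probability). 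I expect proving that the known-topology schedule pipelines with gap $O(\log n)$ to be the technically delicate heart of the argument.
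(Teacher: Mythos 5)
Your plan is, in essence, to run $k$ time-shifted copies of the Gasieniec--Peleg--Xin schedule with injection gap $\Theta(\log n)$ and argue that the wavefronts stay spatially separated. This is essentially the generalization of \cite{GPX05} that was attempted in \cite{FQ06} and whose correctness was later disproved, and the step you yourself flag as ``the technically delicate heart'' is exactly where it fails. The difficulty is twofold. First, a message does not advance through the GST schedule at a uniform deterministic rate: it moves one BFS layer per $O(1)$ rounds only along fast stretches, while at each of the up to $\lceil\log_2 n\rceil$ junctions between stretches it must cross via randomized Decay transmissions, where it can stall for $\Theta(\log n)$ rounds or more. A trailing message travelling along a fast stretch therefore catches up to a stalled leading message, the transmitter sets of the two copies cease to lie in disjoint neighborhoods, and the single-message correctness proof --- which relies crucially on every node that lacks the message being silent --- gives no guarantee once nodes carrying other messages cause collisions. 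Second, you never specify what a node transmits when several shifted copies prompt it in the same round while it holds several messages; any fixed choice delays the other messages, and this is not bookkeeping but a central obstruction to achieving additive $O(k\log n)$ throughput.

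The paper's route is different on all these counts: it (i) replaces the \cite{GPX05} schedule by a new one in which the slow (Decay-like) transmissions are keyed to a \emph{virtual distance} in an auxiliary graph $G'$ built from the fast stretches, (ii) resolves the which-message-to-send question with random linear network coding, so a prompted node always sends a random linear combination of everything it holds, and (iii) proves progress by a backwards-in-time potential argument (\emph{backwards analysis}) establishing that the schedule is \emph{multi-message viable}: one message still spreads in $O(D+\log^2 n)$ rounds even when every other scheduled node transmits noise. Combined with the projection analysis of RLNC --- a union bound over the $2^k$ coefficient vectors, each needing failure probability $2^{-\Theta(k+\log n)}$, which is precisely where the $k\log n$ term enters --- this yields the claimed bound. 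To salvage your approach you would need to prove the non-interference claim for the shifted schedules, and the stochastic stalling at stretch junctions makes that claim false as stated.
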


\begin{theorem}\label{thm:multipleBcastUnknown}
In radio networks with unknown topology and with collision detection, there is a randomized distributed algorithm that broadcasts $k$ messages in $O(D + k \log n + \log^6 n)$ rounds, with high probability.
\end{theorem}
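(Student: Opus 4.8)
The plan is to build Theorem~\ref{thm:multipleBcastUnknown} by combining the two tools already developed in the paper: the single-message broadcast of Theorem~\ref{thm:singlebcast}, which gives us a way to push information through the network with only an additive $O(\log^6 n)$ overhead even when the topology is unknown, and the pipelining idea implicit in Theorem~\ref{thm:multipleBcastKnown}, which achieves the $O(D + k\log n)$ term by streaming messages along a BFS-like structure. The obstacle, of course, is that in the unknown-topology setting we cannot presuppose the BFS layering that the known-topology algorithm exploits; the whole point is to recover enough of that structure on the fly using collision detection.

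First I would use the machinery behind Theorem~\ref{thm:singlebcast} to compute, in $O(D + \log^6 n)$ rounds, a BFS tree (or at least a layering of the nodes into distance classes from the source) together with a schedule that lets each layer transmit without destructive interference from other layers. The single-message algorithm already must, in effect, establish that a wavefront can be advanced one hop per (amortized) round after the initial $\mathrm{poly}\log$ setup; I would extract from its analysis a \emph{reusable} communication backbone — concretely, a collision-free or low-contention schedule on the edges of a BFS tree — so that the expensive $O(\log^6 n)$ cost is paid only once. This is the natural place to inherit the collision-detection advantage: detecting collisions is what allows the setup phase to resolve contention and identify a sparse set of relay nodes in poly-logarithmic time rather than the $O(D\log\frac{n}{D})$ that the detection-free model forces.

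With such a backbone in hand, the $k$ messages are pipelined down the layers exactly as in the known-topology case: the source injects one new message per round into layer $1$, and each layer forwards the message it received in the previous round to the next layer, so that after the $O(D + \log^6 n)$ setup the messages stream through at a rate of one per round, giving $O(D + k + \log^6 n)$ rounds for the pipelining itself. The extra $k\log n$ term arises because, in the unknown-topology model, resolving which single node in a layer should relay in a given round — so that exactly one neighbor transmits to each node of the next layer — requires a short $O(\log n)$-round contention-resolution or coding step per message batch rather than a single deterministic round; this is the same phenomenon that separates the $\log^2 n$ additive term in Theorem~\ref{thm:multipleBcastKnown} from what one would get for free.

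The hard part will be showing that the layer-to-layer forwarding schedule built during setup continues to deliver \emph{distinct} messages reliably while $k$ of them are simultaneously in flight across $D$ different layers, since a node may now have several neighbors in the preceding layer that hold different messages and wish to transmit in the same round. I would handle this by arguing that the backbone restricts, with high probability, each node's set of \emph{effective} senders to a single relay per layer (this is exactly the structural guarantee I extract from the proof of Theorem~\ref{thm:singlebcast}), so that collisions between distinct in-flight messages are confined to the $O(\log n)$-round resolution windows already accounted for. Granting this structural lemma, a union bound over the $O(k + D)$ transmission rounds and the $\mathrm{poly}(n)$ nodes yields the high-probability correctness, and summing the setup, pipelining, and per-batch resolution costs gives the claimed $O(D + k\log n + \log^6 n)$ bound.
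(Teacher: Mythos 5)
Your proposal correctly identifies the two ingredients (the ring/GST setup from Theorem~\ref{thm:singlebcast} and pipelining for throughput), but it assumes away the two difficulties that the paper's proof actually has to solve. First, the ``structural lemma'' you propose to extract from the single-message analysis --- that the backbone restricts each node's effective senders to a single relay per layer, so that collisions between distinct in-flight messages are confined to short resolution windows --- does not follow from that analysis and is essentially the open crux. Every collision-freeness guarantee in the single-message setting (both for the Decay protocol and for the GST schedule of Gasieniec et al.) relies on the fact that nodes which do not yet hold \emph{the} message stay silent; once $k$ messages are simultaneously in flight, nodes holding \emph{other} messages transmit on the same schedule and destroy exactly the progress steps the single-message proof counts on. The paper formalizes this as the multi-message viability (MMV) property, observes that the existing GST schedule appears \emph{not} to be MMV (and that a prior attempted generalization in~\cite{FQ06} was disproved), and has to design a new schedule --- keying the slow transmissions to virtual-distances in an auxiliary graph $G'$ rather than to BFS levels --- together with the backwards-analysis potential argument to prove it MMV. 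None of this is recoverable by black-boxing Theorem~\ref{thm:singlebcast}.

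Second, your pipeline has each relay forward ``the message it received in the previous round,'' but a node that has accumulated several messages must choose which one to send, and a wrong choice stalls the other messages; the paper resolves this with random linear network coding (each transmission is a random combination of everything received, analyzed via the projection/infection technique of~\cite{Haeupler11}), plus a generation/batching trick to keep the coefficient-vector overhead at $O(\log n)$ bits and FEC to hand batches across ring boundaries. Your accounting of the $k\log n$ term as ``contention resolution per batch'' gestures at the right order of magnitude, but without the MMV schedule, the backwards analysis, and the coding layer, the argument as written does not go through.
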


About \Cref{thm:singlebcast}, we remark that prior to this work, the best known solution for single-message broadcast was the $O(D\log{n/D}+\log^2 n)$ algorithms presented independently by Czumaj and Rytter~\cite{CR}, and Kowalski and Pelc\cite{KP}, for the model without collision detection. In that model, these bounds are optimal~\cite{ABLP, KM}. As Peleg points out in~\cite{Peleg07}, prior to this work, it was unclear whether these upper bounds can be improved in the model with collision detection. \Cref{thm:singlebcast} answers this question by showing that a better upper bound is indeed achievable. We remark that the bound of \Cref {thm:singlebcast} is within an additive poly-log of the $\Omega(D+\log^2 n)$ lower bound, that follows from the $\Omega(\log^2 n)$ lower bound of~\cite{ABLP} and the obvious lower bound of $\Omega(D)$.

About Theorems \ref{thm:multipleBcastKnown} and \ref{thm:multipleBcastUnknown}, we remark that these two results use random linear network coding (RLNC). Moreover, we note that even in the strong model of centralized algorithms with full topology knowledge, with collision detection, and with network coding, $k$-message broadcast has a lower bound of $\Omega(D+k\log n+ \log^2 n)$ rounds. This lower bound follows from the $\Omega(k \log n)$ throughput-based lower bound of~\cite{NCLB} for a $k$-message broadcast, the $\Omega(\log^2 n)$ lower bound of~\cite{ABLP} for a single message broadcast, and the trivial $\Omega(D)$ lower bound. Thus, the complexity of \Cref{thm:multipleBcastKnown} is optimal and the complexity of \Cref{thm:multipleBcastUnknown} is optimal modulo the additive $O(\log^6 n)$ term. 

When looking at the issue from a practical angle, \Cref{thm:singlebcast} and \Cref{thm:multipleBcastUnknown} have an interesting message: they show that one can replace the (expensive and not-completely-reasonable) assumption of all nodes knowing the full topology of the network, with (the considerably more reasonable and usually-available) collision detection, and still perform single or multiple broadcast tasks almost in the same time.
\smallskip

To achieve the above three results, we present three new technical elements, which each can be interesting on their own:  
\begin{enumerate}
\item[\textbf{(A)}] The first element is a distributed construction of a Gathering-Spanning-Tree (GST) with round complexity of $O(D\log^4 n)$. GSTs were first introduced by~\cite{GPX05} to obtain broadcast algorithms with an additive $O(D)$ diameter dependence in the known topology setting~\cite{GPX05, GP, FQ06}. The only known construction of GST prior to this work was the centralized algorithm of Gasieniec et al.~\cite{GPX05}, which has step-complexity of $O(n^2)$ operations and requires the full knowledge of the graph. We use our new GST construction to prove \Cref{thm:singlebcast}. For this we first decompose the graph appropriately, then we construct a GST for every part in parallel and lastly we use this setup to broadcast the (single) message efficiently. 

\item[\textbf{(B)}] The second element is a novel transmission schedule atop GST for solving multiple message broadcast problems. We contend this schedule to be the right generalization of~\cite{GPX05} for multiple messages. Such a generalization was also attempted in~\cite{FQ06} but its correctness was disproved~\cite{QinPersonalCommunication}.

\item[\textbf{(C)}] The third element is \emph{backwards analysis}, an new way to analyze the progress of messages during a multi-message radio network broadcast.
Backward analysis shows that a message spreads quickly even when other messages that are spread at the same time cause collisions. A priori it is not clear that information dissemination remains efficient in the presence of these collisions, which only arise in the mutli-message setting. Insights from the backwards analysis were crucial in the design of our multi-message transmission schedule and also enable us to apply the projection analysis of Haeupler~\cite{Haeupler11} for analyzing random linear network coding to prove \Cref{thm:multipleBcastKnown} and \Cref{thm:multipleBcastUnknown}.

\end{enumerate}

\subsection{Related Work}
Designing distributed broadcast algorithms for radio networks has received extensive attention, starting with the pioneering work of Bar-Yehuda, Goldreich and Itai (BGI)~\cite{BGI}. Here, we present a brief review of the results that directly relate to this paper.  

\paragraph{Single-Message Broadcast:} Peleg~\cite{Peleg07} provides a comprehensive survey of the known results about single-mes\-sage broadcast. BGI~\cite{BGI} present the Decay protocol which broadcasts a single message in $O(D\log n+ \log^2 n)$ rounds. The best known distributed algorithms for single-message broadcast in for the setting where the topology is unknown are the $O(D\log{\frac{n}{D}}+\log^2 n)$ algorithms presented independently by Czumaj and Rytter~\cite{CR}, and Kowalski and Pelc\cite{KP}. These algorithms can be viewed as clever optimizations of the Decay protocol~\cite{BGI}. Moreover, similar to the Decay protocol, these two algorithms are presented for the model without collision detection and are optimal in that model~\cite{ABLP, KM}. Prior to this work, no better algorithm was known for the model with collision detection. If the topology of the network is known, then the algorithm of Gasieniec, Peleg and Xin~\cite{GPX05} achieves the optimal $O(D + \log^2 n)$ time complexity. Kowlaski and Pelc~\cite{KowalskiPelc2007} gave an explicit deterministic broadcast protocol which achieves the same time complexity.

\paragraph{Multi-Message Broadcast:} The complexity of multi-mes\-sage broadcast (with bounded packet size) is less understood. In the model without collision detection, the following results are known. The earliest work on multi-message broadcast problem is by BarYehuda et al.~\cite{BII93}, which broadcasts $k$ messages in $O((n + (k + D) \log n) \log \Delta)$ rounds, where $\Delta$ is the maximum node degree. Chlebus et al.\cite{Chlebus2011} present a deterministic algorithm that broadcasts $k$ messages in $O(k\log^3 n + n\log^4 n)$ rounds. Khabbazian and Kowalski \cite{KK} and Ghaffari and Haeupler \cite{DISC-Structuring} give randomized algorithm that reduce the dependency on $k$ to $O(k \log n)$ using coding techniques. Ghaffari et al.\cite{NCLB} give an $\Omega(k \log n)$ lower bound which shows that this throughput is optimal and furthermore study whether coding is neccesary to achieve this throughput. The randomized algorithms of \cite{KK} and \cite{DISC-Structuring} broadcast $k$ messages in $O(k\log\Delta+(D+\log n)\log n\log\Delta)$ rounds and $O(k\log\Delta+(D+\log n)\log n\log\Delta)$ rounds respectively. Again, prior to this work, no better algorithm was known for the model with collision detection. 


\section{Single-Message Broadcast}\label{sec:GST}
We first recall the definition of a Gathering-Spanning-Tree (GST)~\cite{GPX05}, in \Cref{subsec:GSTdef}. Then, in \Cref{subsec:GSTAlg}, we present a distributed algorithm with time complexity $O(D \log^4 n)$ for constructing a GST, in radio networks with unknown topology (even without collision detection). In \Cref{subsec:single-message}, we then show that this algorithm can be used to broadcast a single message in $O(D+ \log^6 n)$ rounds, in radio network with unknown topology but with collision detection. 

\subsection{Gathering Spanning Trees (GST)}\label{subsec:GSTdef}
\paragraph{Ranked BFS:}
Consider a BFS tree $\mathcal{T}$ in graph $G$, rooted at source node $s$. Also, suppose that in this tree, we have assigned to each node $v$ a level number $\ell(v)$, which is equal to the distance of $v$ from $s$. We \emph{rank} the nodes of $\mathcal{T}$ using the following inductive \emph{ranking rule}: 
Each leaf of $\mathcal{T}$ gets rank $1$. Then, consider node $v$ and suppose that all children of $v$ in $\mathcal{T}$ are already ranked. Let $r$ be the maximum rank of these children. If $v$ has exactly one child with rank $r$, then node $v$ gets rank $r$. If $v$ has two or more children with rank $r$, then $v$ gets rank $r+1$. 
As shown in \cite{GPX05}, one can easily see that in each \emph{ranked BFS}, the largest rank is at most $\ceil{\log_{2} n}$.

\begin{figure*}[t]
	\centering
		\includegraphics[width=\shortOnly{0.8\textwidth}\fullOnly{0.9\textwidth}]{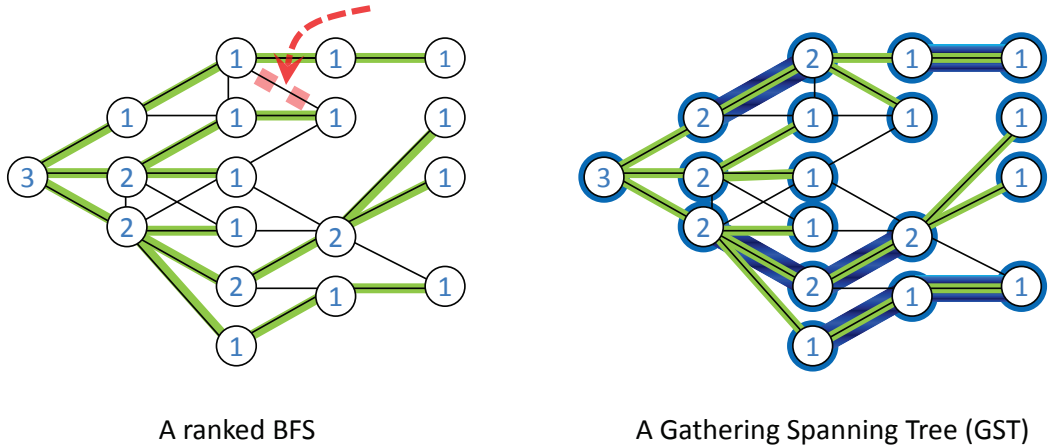}
	\caption{Gathering Spanning Tree}
	\label{fig:GST}
\end{figure*}

\paragraph{Gathering Spanning Tree (GST)\cite{GPX05}:}
A ranked BFS-tree $\mathcal{T}$ is called a GST of graph $G$ if and only if the following \emph{collision-freeness property} is satisfied: 

\bigskip

\noindent\fbox{
	\parbox{0.95\linewidth} {
In graph $G$, any node of rank $r$ on level $l$ of $\mathcal{T}$ is adjacent to \emph{at most one} node of rank $r$ at level $l-1$ of $\mathcal{T}$. In other words, if there are two nodes $u_1$ and $u_2$ with rank $r$ on level $l$ of $\mathcal{T}$, and their parents in $\mathcal{T}$ are respectively $v_1$ and $v_2\neq v_1$ (on level $l-1$ of $\mathcal{T}$), and $v_1$ and $v_2$ have rank $r$ as well, then there is no edge between $v_1$ and $u_2$ or between $v_2$ and $u_1$. 
	}
}
\medskip

\paragraph{Fast Stretches in a GST:}
In a GST $\mathcal{T}$, for each path in $\mathcal{T}$ from a node $v$ to a node $u$ that is a descendant of $v$ in $\mathcal{T}$, we call this path a \emph{fast stretch} if all the nodes on the path have the same rank. Note that a fast stretch might be just a single node.

\paragraph{Distributed GST:} In a distributed construction of a GST, each node $u$ must learn the following four items\footnote{From  (\texttt{2}) and (\texttt{4}), any node $u$ can easily infers whether it is the first node in a fast stretch and whether its parent is in that stretch as well.}: (\texttt{1}) its level $\ell(u)$,  (\texttt{2}) its own rank $r(u)$,  (\texttt{3}) the id of its parent $v$, and (\texttt{4}) the rank of its parent $r(v)$.

Figure \ref{fig:GST} presents an example of a GST. The black edges present the graph $G$ and the thicker green edges present a rank labeled BFS tree $\mathcal{T}$ of $G$. On the left side, we see a rank-labeled BFS tree, but this tree is not a GST because of the violation of the collision-freeness property indicated by the red dashed arrow. On the right side, we see another rank-labeled BFS of the same graph $G$, which is a GST. In this GST, the green edges that are coated with wide blue lines indicate the fast stretches. Each node that is not incident on any of these blue-coated edges forms a trivial fast-stretch made of just a single node.

\paragraph{Broadcast Atop GST:} In~\cite{GPX05} Gasieniec et al. presented an algorithm to broadcast a single message in $O(D+\log^2 n)$ rounds, atop a GST. A high-level explanation is as follows: with a careful timing, the message can be sent through the fast stretches without any collision. That is, we can (almost simultaneously) send the message through different stretches such that in each fast stretch, the message gets broadcast from the start of the stretch to the end of the stretch in a time asymptotically equivalent to the length of the stretch. On the other hand, since the largest rank in the tree $\mathcal{T}$ is at most $\ceil{\log_{2} n}$ and because on each path from the source to any node $v$, the ranks are non-increasing, we get that the path from the source to each node $v$ is made of at most $\ceil{\log_{2} n}$ distinct fast stretches. By using the \emph{Decay protocol}\footnote{The Decay protocol is a standard technique for coping with collisions in radio networks. We present a short recap on it in \Cref{subsub:recruit}.} \cite{BGI} on each of the (at most) $\ceil{\log_{2} n}$ connections between the fast stretches, we get a broadcast algorithm with time complexity $O(D + \log^2 n)$. We refer the reader to \cite{GPX05} for the details of this broadcast algorithm. We remark that we will use \cite{GPX05} simply as a black-box that broadcasts a single-message in time $O(D + \log^2 n)$ on top of the GSTs we construct. 

\subsection{Distributed GST Construction}\label{subsec:GSTAlg}
In this subsection, we present the following result:

\begin{theorem}\label{thm:GSTconst}
In the radio networks (even without collision detection), there exists a distributed GST construction algorithm with time complexity $O(D \log^4 n)$ rounds. 
\end{theorem}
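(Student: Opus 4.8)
The plan is to reduce the whole construction to a single bottom-up wave over the BFS levels. First I would run a Decay-based broadcast from the source \cite{BGI}, after which every node $u$ knows its level $\ell(u)$, i.e.\ its distance from $s$; this costs $O(D\log n+\log^2 n)$ rounds and fixes item~(\texttt{1}), as well as, for every node, its candidate-parent set (the neighbors one level closer to $s$). All remaining work is to assign ranks and parents so that the ranking rule and the collision-freeness property hold. I would do this by processing the levels in the order $l=D,D-1,\dots,1$, maintaining after the step for level $l$ the invariant that every node at level $\ge l$ knows its rank, the parents of all nodes at levels $>l$ are fixed, and collision-freeness holds on every level boundary at or below the $(l,l+1)$ boundary.

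The crux is the step for a single level $l$, which must solve one joint problem: choose, for each node $u$ at level $l+1$, a parent among its level-$l$ neighbors, and simultaneously assign each level-$l$ node $v$ the rank that the ranking rule gives to the children it thereby acquires. Because the ranks of level-$(l+1)$ nodes are already known, I would process the rank values $r=\lceil\log_2 n\rceil,\dots,1$ in decreasing order. A rank-$r$ child $u$ that wants to extend its fast stretch must attach to a level-$l$ neighbor that will itself end up with rank $r$; since the ranking rule allows a rank-$r$ node at most one rank-$r$ child, this is a matching problem between the rank-$r$ children and their prospective same-rank parents, subject to the collision-freeness requirement that no rank-$r$ child be left adjacent to a second rank-$r$ node at level $l$. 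Handling ranks top-down lets each level-$l$ node commit its rank once its highest-rank child is fixed; the remaining children attach as lower-rank branches, and the rank is incremented by one exactly when two children tie for the current maximum. Every primitive in this subroutine --- advertising a rank, proposing a parent, confirming a match --- is implemented with the Decay protocol, so it succeeds with high probability despite collisions and, crucially, without collision detection.

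Correctness then reduces to showing that each per-level matching step is feasible and produces ranks consistent with the ranking rule; given that, the invariant yields a BFS tree that is collision-free on every boundary, i.e.\ a GST, with all of (\texttt{1})--(\texttt{4}) known to each node (the parent's rank is learned during the matching handshake) and with ranks bounded by $\lceil\log_2 n\rceil$ as guaranteed for any ranked BFS. For the running time, the wave is a dependency chain of length $D$, since the step for level $l$ needs the ranks produced by the step for level $l+1$, so the total cost is $D$ times the per-level cost. Within a level there are $O(\log n)$ rank classes; each is resolved in $O(\log^3 n)$ rounds by $O(\log n)$ rounds of Decay-based proposal and acknowledgement, each boosted to high probability in $O(\log^2 n)$ rounds. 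Over $O(\log n)$ classes this is $O(\log^4 n)$ per level, and hence $O(D\log^4 n)$ overall.

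The main obstacle, and where most of the care goes, is this per-level joint assignment under two intertwined difficulties: a circular dependency and the absence of collision detection. The circularity is that the rank of a level-$l$ node depends on which level-$(l+1)$ nodes adopt it as parent, while the collision-freeness of those adoptions depends on the very ranks being computed; the top-down-by-rank ordering is exactly what breaks it, but one must additionally prevent a stray level-$l$ node that ends up with rank $r$ (for instance a node left childless, hence a rank-$1$ leaf) from sitting adjacent to a rank-$r$ child it does not parent, which forces the matching to forbid such configurations as well. The radio-specific difficulty is that deciding a node's rank requires detecting whether two or more of its children share the current maximum rank --- precisely a ``two or more transmitters'' test, which collision detection would answer for free but which here must be simulated by an $O(\log n)$-round Decay-based counting gadget. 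Making this gadget simultaneously reliable across all $O(D\log n)$ sub-phases and free of any reliance on collision detection is the step I expect to be the most delicate.
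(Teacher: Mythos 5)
Your skeleton matches the paper's: BFS layering via Decay, then a bottom-up wave over the levels, each level solved as a bipartite parent-assignment problem processed rank by rank from $\lceil\log_2 n\rceil$ down to $1$. But there is a genuine gap at exactly the point you flag as ``the most delicate'': you never give a mechanism for the per-rank matching, and the mechanism is the technical core of the result. The difficulty is not the ``two or more children'' test (a $\Theta(\log n)$-round Decay handshake handles that, as in the paper's Recruiting Protocol). The difficulty is that a blue node $u$ of rank $r$ may only receive a \emph{same-rank} parent if no other level-$l$ node of rank $r$ is adjacent to $u$; when many rank-$r$ children each see many candidate parents, you cannot resolve them all in one shot. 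The paper's solution is to only ever grant a same-rank parent to a \emph{loner} (a blue node with exactly one currently active red neighbor), and to manufacture loners by repeatedly thinning the active red set: in each epoch, red nodes split randomly into brisk and lazy halves, recruiting is run, nodes that recruit $\ge 2$ children commit (with rank $r+1$) and drop out, and a shrinkage lemma shows the active set decreases by a constant factor per epoch with constant probability, so $\Theta(\log n)$ epochs finish a rank. Collision-freeness is then proved \emph{from} the loner property. Your proposal recognizes the constraint but supplies no process that satisfies it.

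This omission also breaks your running-time accounting. Each recruiting handshake already costs $\Theta(\log^3 n)$ rounds, and the shrinkage argument requires $\Theta(\log n)$ epochs of it, so a single rank costs $\Theta(\log^4 n)$ and a single level costs $\Theta(\log^5 n)$ --- not the $O(\log^3 n)$ per rank and $O(\log^4 n)$ per level you claim. With a strictly sequential dependency chain of length $D$, this yields only $O(D\log^5 n)$. The paper recovers $O(D\log^4 n)$ by pipelining across levels: once ranks $i$ and $i-1$ are settled on the $(l-1,l)$ boundary, the set of rank-$i$ nodes at level $l-1$ is final, so the rank-$i$ assignment on the $(l-2,l-1)$ boundary can start and be interleaved (in alternating rounds) with the continuing lower-rank work on the $(l-1,l)$ boundary. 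This pipelining step is absent from your proposal and is necessary for the stated bound.
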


We show a GST construction with round-complexity of $O(D \log^5 n)$ in Sections \ref{subsub:recruit} to \ref{subsub:bipartite}. We later improve this to $O(D \log^4 n)$ rounds, in \Cref{subsub:mod}. 

\subsubsection{Black-Box Tools}\label{subsub:recruit}
Before starting the construction, we first present two black-box tools which we use in our construction.

\medskip
\noindent \textbf{\emph{Decay Protocol}\cite{BGI}:} Rounds are divided into phases of $\log n$ rounds, and in the $i^{th}$ round of each phase, each node $v$ transmits with probability $2^{-i}$ (if it has a message for transmission). 
\begin{lemma}\label{lem:decay}\textbf{(Bar-Yehuda et al.\cite{BGI})}
For each node $v$, if at least one neighbor of $v$ has a message for transmission, then in each phase of the Decay protocol, node $v$ receives at least one message with probability at least $\frac{1}{8}$. Moreover, in $\Theta(\log n)$ such phases, $v$ receives at least one message, with high probability.
\end{lemma}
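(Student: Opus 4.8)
The plan is to analyze a single phase first, establish the $\frac{1}{8}$ per-phase success probability, and then amplify over $\Theta(\log n)$ phases using independence. Fix the node $v$, let $S$ be the set of neighbors of $v$ that hold a message for transmission during the phase, and write $k = |S| \ge 1$. Since only nodes holding a message ever transmit, the event that $v$ successfully receives in round $i$ is exactly the event that precisely one node of $S$ transmits in that round. With per-node transmission probability $p_i = 2^{-i}$ and independent coin flips, this happens with probability $k\,p_i\,(1-p_i)^{k-1}$, so everything reduces to showing that some round of the phase makes this quantity at least $\frac{1}{8}$.

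The heart of the argument is to exhibit one round whose transmission probability is well matched to $k$. I would pick $i^\star$ to be the smallest integer with $2^{i^\star} \ge k$ (overriding to $i^\star = 1$ in the degenerate case $k=1$, where round $0$ does not exist). Since $v$ has at most $n$ neighbors, $i^\star \le \ceil{\log_2 n}$, so this round indeed lies within the $\log n$ rounds of the phase. The choice of $i^\star$ yields $\frac{1}{2k} < p_{i^\star} \le \frac{1}{k}$, hence $k\,p_{i^\star} > \frac12$; meanwhile $p_{i^\star} \le \frac1k$ together with the standard inequality $(1-\frac1k)^{k-1} \ge \frac1e$ gives $(1-p_{i^\star})^{k-1} \ge (1-\frac1k)^{k-1} \ge \frac1e$. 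Multiplying, the probability that $v$ receives in round $i^\star$ is at least $\frac{1}{2e} > \frac18$, and since receiving in any round of the phase suffices, the per-phase success probability is at least $\frac18$. I expect this balancing step to be the only real obstacle: the bound $k\,p(1-p)^{k-1} \ge \frac{1}{2e}$ hinges on simultaneously controlling $k p_{i^\star}$ from below and $(1-p_{i^\star})^{k-1}$ from above at the \emph{same} $i^\star$, and the degenerate values of $k$ (namely $k=1$ and the boundary of the round range) need a quick separate check.

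For the second claim I would amplify. The coin flips in distinct phases are independent, and the hypothesis guarantees $v$ retains an active neighbor throughout, so conditioned on failure in any prefix of phases the next phase still succeeds with probability at least $\frac18$. Consequently, the probability that $v$ fails in all of $t$ consecutive phases is at most $\left(\frac78\right)^t$. Choosing $t = \lceil c \ln n / \ln \frac87 \rceil = \Theta(\log n)$ drives this below $n^{-c}$, so $v$ receives at least one message with high probability, as claimed.
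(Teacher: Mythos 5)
Your proof is correct, and it is the standard argument for the Decay progress lemma: match the transmission probability $2^{-i^\star}$ to the number $k$ of competing neighbors so that $k\,p_{i^\star}(1-p_{i^\star})^{k-1}\ge \frac{1}{2e}>\frac18$, then amplify over independent phases. Note that the paper itself gives no proof of this lemma --- it is imported as a black box from Bar-Yehuda et al.~\cite{BGI} --- so there is nothing to diverge from; the only points worth making explicit in your write-up are that $v$ is assumed to listen throughout the phase (which holds in this paper's usage, since only nodes that already hold a message participate in a given epoch) and that the set $S$ of transmitting neighbors is fixed for the duration of a phase, both of which your argument implicitly relies on.
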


\bigskip
\noindent \textbf{\emph{Recruiting Protocol}:} This tool can be abstracted by the guarantees that it provides, which we present in \Cref{lem:recruit}. 

\begin{lemma}\label{lem:recruit} Consider a bipartite graph $\mathcal{H}$ where nodes on one side are called \emph{red} and nodes on the other side are called \emph{blue}. The recruiting protocol achieves the following three properties, w.h.p., in $\Theta(\log^3 n)$ rounds:
\begin{inparaenum}[\bfseries{} (a)] 
\item for each blue node $u$, we assign an adjacent red node of $v$ to $u$. In this case, we say $u$ is \emph{recruited} by $v$ (then called parent of $u$), 
\item each red node $v$ knows whether it recruited zero, one, or at least two blue nodes, 
\item each recruited blue node $u$ knows whether its parent $v$ recruited zero, one, or at least two blue nodes.
\end{inparaenum} 
\end{lemma}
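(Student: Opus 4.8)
The plan is to run the protocol in three phases---recruitment, counting, and feedback---each built on top of the Decay protocol, so that the only probabilistic guarantee I ever invoke is \Cref{lem:decay}.

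\emph{Phase 1 (recruitment, red$\to$blue).} For $\Theta(\log n)$ Decay phases (i.e.\ $\Theta(\log^2 n)$ rounds) every red node transmits its own id while every blue node listens, and each blue node adopts as its parent the first red node whose id it successfully receives. Since every blue node has at least one red neighbor, \Cref{lem:decay} guarantees that each blue node receives some red id within $\Theta(\log n)$ phases w.h.p., and a union bound over the (at most $n$) blue nodes gives property~(a). This phase is easy precisely because a blue node only needs to hear \emph{some} adjacent red node, which is exactly what Decay delivers.

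\emph{Phase 2 (counting, blue$\to$red)} is where the real work lies and is responsible for property~(b). After Phase~1 each blue node points to a unique parent, so for a fixed red $v$ its children $C_v$ form a subset of its blue neighbors. The idea is to have the blue nodes report back: each blue node repeatedly transmits the pair (its id, its parent's id) under a Decay schedule, and $v$ records the set of \emph{distinct} children it hears, i.e.\ received messages whose parent field equals $v$. Because a message's parent field never equals $v$ unless the message genuinely comes from a child, the number of distinct children $v$ records never exceeds $|C_v|$; hence $v$ outputs ``$0$'' if it records no child, ``$\geq 2$'' once it records two distinct children, and ``$1$'' if it records exactly one. The first two decisions are always sound; the soundness of the third will follow from the isolation guarantee below (which ensures that a red node with $\geq 2$ children really does record two of them w.h.p.). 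To detect the ``$\geq 2$'' case with collision detection, one round in which only $v$'s children transmit already suffices (silence/single/$\top$ reads off $0/1/\geq 2$); without collision detection I instead need $v$ to receive two \emph{distinct} child ids, which I obtain by nesting a second Decay over the children, so that at the scale matching $|C_v|$ a roughly uniform child is isolated in each successful round and, over $\Theta(\log n)$ rounds, two different children surface whenever $|C_v|\geq 2$.

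The main obstacle is interference that is \emph{external} to $C_v$: a blue neighbor $w$ of $v$ whose parent is some $v'\neq v$ is simultaneously reporting to $v'$, and its transmissions collide at $v$. This is harmless when $v$ has many children but is fatal in the worst case of a red node with a \emph{single} child hidden among many such foreign blue neighbors, since then a plain Decay almost never lets the lone child transmit alone at $v$ (the child wins the $\Theta(1/\deg)$ lottery only with probability $\Theta(\log n/\deg)$ over an entire Decay run). To overcome this I plan to interleave counting with \emph{silencing}: once a blue node has been heard by its own parent, the parent acknowledges it in a short red$\to$blue Decay sub-phase and the blue node stops transmitting, so that the pool of competing transmitters in every neighborhood shrinks over time and each red node's children eventually face no foreign interference. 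Organizing this as $\Theta(\log n)$ silencing rounds, each a full $\Theta(\log^2 n)$-round Decay run over the still-active blue nodes across the $\Theta(\log n)$ probability scales, gives the target $\Theta(\log^3 n)$ round bound; the delicate point I expect to have to argue carefully is that this silencing process drains every neighborhood fast enough---so that in particular every red node hears its one child, or two of its children, within the allotted number of rounds w.h.p.

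Finally, \emph{Phase 3 (feedback, red$\to$blue)} delivers property~(c): each red node broadcasts its computed count $0/1/\geq 2$ to its children using the same Decay-with-isolation mechanism in reverse, so that every child cleanly hears its own parent's announcement rather than a colliding announcement from another incident red node. Since a recruited blue node's parent has already recruited at least that node, this reduces to informing each child whether it is the unique child or one of several. Running this within another $\Theta(\log^3 n)$ rounds and taking a union bound over all nodes and all phases yields all three properties simultaneously, w.h.p., in $\Theta(\log^3 n)$ rounds.
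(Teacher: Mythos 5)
There is a genuine gap, and it sits exactly where you flag it: the claim that the silencing process ``drains every neighborhood fast enough.'' Your decomposition --- blue nodes unilaterally pick a parent in Phase~1, and red nodes then try to count their children in Phase~2 --- creates parent--child edges that the parent does not know about, and recovering that knowledge is not a detail but the whole difficulty. Concretely, a red node $v$ with exactly one child $u$ among $m$ active blue neighbors hears $u$ in a Decay phase only with probability $\Theta(1/m)$ (at the scale matching $m$, the isolated transmitter is a uniformly random active neighbor of $v$, and it is $u$ with probability $1/m$). Your silencing mechanism does not repair this in general: take $v_0$ adjacent to $u_0$ (degree-one, so it adopts $v_0$) and to $u_1,\dots,u_m$, where each $u_j$ has $m-1$ further red neighbors, each of which also owns $m$ private degree-one blue children. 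With constant probability every $u_j$ adopts one of its other neighbors, so $v_0$ has exactly one child; but each $u_j$'s parent hears $u_j$ specifically only with probability $\Theta(1/m)$ per phase (it is one of $\Theta(m)$ children competing at that parent), so the foreign neighbors of $v_0$ drain at rate $\Theta(1/m)$ each and $v_0$'s active neighborhood stays of size $\Theta(m)$ for $\Theta(m)$ phases. With $m=n^{1/3}$ this uses $O(n)$ nodes and defeats any $\mathrm{polylog}(n)$ budget: $v_0$ outputs ``$0$'' instead of ``$1$,'' violating property (b), and $u_0$ never learns its parent's count, violating (c). (Your collision-detection shortcut --- ``one round in which only $v$'s children transmit'' --- presupposes a schedule in which only $v$'s children transmit, which is the problem being solved; and the lemma is needed in the no-collision-detection model, since the GST construction of \Cref{thm:GSTconst} claims to work there.)

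The paper avoids this entirely by never letting a blue node become a child before its parent has confirmed reception. Each iteration is a three-step handshake: red nodes transmit their ids with a probability swept over $\Theta(\log n)$ scales; a blue node $u$ that receives some $v$ collision-free echoes $(u.\mathrm{id},v.\mathrm{id})$ under a blue-side Decay phase; and $v$ then repeats its transmission carrying either the single id it heard or a symbol $\Sigma$ for ``two or more.'' The key structural fact (proved in the paper) is that any blue neighbor of $v$ that transmits in the echo phase must itself have heard $v$ in step one --- $v$'s own transmission would have collided at that blue with any other red --- so \emph{every} echo reaching $v$ is from a genuine candidate, there is no ``foreign'' interference with a different intent, and \Cref{lem:decay} gives $v$ a constant per-iteration probability of recruiting. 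Recruitment is thus defined by the red node's successful reception, which makes properties (b) and (c) hold by construction rather than requiring a separate counting phase. If you want to keep a phase-based structure, you would need to replace ``assign first, count later'' with some such reception-confirmed assignment.
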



\medskip\smallskip

\noindent\fbox{
\vspace{0.1cm}

	\parbox{0.95\linewidth} {
	\smallskip
		\textbf{{Recruiting Protocol:}} The protocol consists of $\Theta(\log^2 n)$ recruiting iterations, each having $2+\Theta(\log n)$ rounds as follows: 
		
		\medskip
		\begin{adjustwidth}{0.0em}{0em}
		\begin{itemize}
		\item In the first round of the $j^{th}$ recruiting iteration, each red node transmits its id with probability $2^{-\lceil\frac{j}{\Theta(\log n)}\rceil}$.
		
		\item Then, we run a phase of the Decay protocol, consisting of $\Theta(\log n)$ rounds, from the side of blue node. In this phase, each not-recruited blue node $u$ that received a message of a red node $v$ tries to transmit $u.id$ and $v.id$ (together in one packet). 
		
		\item After that, the red nodes repeat the exact transmissions of the first round of this iteration, with new contents as follows: (1) if in the previous Decay phase, a red node $v$ received its own id from exactly one blue node $u$, then $v$ broadcasts $v.id$, (2) if the red node $v$ received its own id from two or more blue nodes, then $v$ broadcasts a special message $\Sigma$. (3) Otherwise, $v$ transmits an empty message. 

	\item Next, if a blue node $u$ received its own id or the special message $\Sigma$ in the last round, then we say $u$ is recruited by red node $v$, where $v$ is the red node such that $u$ received $v.id$ in the first round. Note that each red node $v$ knows whether it recruited zero, one or at least two blue nodes. 
	\end{itemize}
	\end{adjustwidth}
		\vspace{0.1cm}
	}
}

\bigskip

\fullOnly{
\begin{proof}[Proof of \Cref{lem:recruit}] We show that each blue node is recruited with high probability. The other parts follow easily from the description of the algorithm 

Consider an arbitrary blue node $u$. It is easy to see that there are $\Theta(\log n)$ iterations such that in the first round of each of these iterations, $u$ receives the message of a red node. This is because, for each $j^{th}$ iteration where $j \in [\frac{d(u)\,\cdot\,\log n}{2},\; 2d(u)\,\cdot\,\log n]$, where $d(u)$ is the degree of $u$ in $\mathcal{H}$, $u$ receives a message in the first round of iteration $j$ with constant probability. A Chernoff bound then shows that in $\Theta(\log n)$ of these iterations, in the first round, $u$ receives the message of a red node.

Consider one such recruiting iteration, and suppose that in the related first round, $u$ receives the message of red node $v$. In the $\Theta(\log n)$ rounds of the Decay phase of that iteration, from the properties of the Decay protocol, we get that with constant probability, the red node $v$ either receives the message of $u$ or it receives at least two messages from blue nodes. Moreover, if $v$ receives a message from a blue node $w$, then $w$ had received the message of node $v$ in the first round of this iteration. This is because, since $v$ transmitted in that round, $w$ could not have received from any other red node $v'$ and since $w$ is transmitting in the decay, we know that it has received the message of one red node. Thus, we conclude that with constant probability, the red node $v$ receives either the message of $u$ or at least two messages from blue nodes. In either case, $u$ gets recruited. Note that $u$ received the message of $v$ in the last round of the iteration simply because this round is an exact repetition of the transmission of the first round of this iteration, where $u$ received a message from $u$.

Now in $\Theta(\log^2 n)$ recruiting iterations, there are $\Theta(\log n)$ iterations where in their first round, $u$ receives the message of a red node. Since in each such iteration $u$ is recruited with a constant probability, we get that after the full run of the Recruiting protocol, $u$ is recruited with high probability.
\end{proof}
}

\subsubsection{GST Construction Outline} 
We first construct a BFS-tree of $G$ and assign to each node $v$ a level $\ell(v)$ that is equal to the distance of $v$ from the source. This can be done in $O(D\log^2 n)$ rounds, as follows: Rounds are divided into $D$ epochs each consisting of $\Theta(\log n)$ phases of the Decay protocol (thus, each epoch has $\Theta(\log^2 n)$ rounds). In each epoch, a node $v$ participates in the decays if and only if it is the source or it has received a message by the end of the last epoch. During these rounds, each node relays the first message it received. The epoch in which a node $v$ receives a message for the first time determines the BFS level $\ell(v)$ of node $v$.

\smallskip
Now that we have a BFS-tree, we build the GST on top of this BFS layering, level by level, and from the largest level towards the source. For this, the problem boils down to the following scenario: Consider level $l$ of layering and assume that the GST is already built for levels $j \geq l$. Consider the bipartite graph $H$ induced on the nodes of level $l-1$ and level $l$, ignoring the (possible) edges inside each level. The core of the problem is to design an algorithm to construct the part of GST between levels $l-1$ and $l$, i.e., the part that is $H$. 

\smallskip
Let us call the nodes on level $l-1$ \emph{red nodes}, and the nodes on level $l$ \emph{blue nodes}. To construct the part of GST that is in $H$, we assign a red \emph{parent} $v$ to each blue node $u$, from amongst the red neighbors of $u$ in $H$. In this case, $v$ is known as $u$'s \emph{parent} and $u$ is a \emph{child} of $v$. This assignment, along with the rankings of blue nodes, leads to a ranking for the red nodes. More precisely, let $v$ be a red node and let $i$ be the maximum rank of blue node children of $v$ in the assignment. Node $v$ gets rank $i$ if it has only one child with rank $i$, and $v$ gets rank $i+1$ if it has more than one child with rank $i$. 

\smallskip
To have a GST, these assignments should be \emph{collision-free}. That is, if there exist blue nodes $u_1$ and $u_2$ and their respective parents $v_1$ and $v_2$, all four with rank $i$, then $H$ must have no edge between $v_1$ and $u_2$, or between $v_2$ and $u_1$.\fullOnly{ Mathematically, if we let $\mathcal{M}$ be the set of edges between blue nodes $u$ of rank $i$ and their respective red parents $v$ with rank $i$, then $\mathcal{M}$ should be an \emph{induced matching} of graph $H$.} We refer to the problem of finding such an assignment as the \emph{Bipartite Assignment Problem}. 

More precisely, in the \emph{Bipartite Assignment Problem}, we should achieve the following 6 properties: (\texttt{1}) For each blue node $u$, we should assign a red neighbor $v$ as its parent, (\texttt{2}) we should rank the red nodes as follows: for each red node $v$, suppose $i$ is the maximum rank of the children of $v$. Then, $v$ should get rank $i$ if $v$ has exactly one blue child of rank $i$, and $v$ should receive rank of $i+1$ if $v$ has two or more blue children of rank $i$, (\texttt{3}) the assignment should be \emph{collision-free}, (\texttt{4}) each red node must know its rank and (\texttt{5}) each blue node $u$ should know the id of its parent and (\texttt{6}) each blue node $u$ should know the rank of its parent.

\smallskip
The \emph{Bipartite Assignment Problem} is the core of the GST construction and once we have a solution for it, repeating the solution level by level from the largest level to source constructs a GST. In the next subsection, we explain how to solve this problem in $O(\log^5 n)$ rounds.

\subsubsection{The Bipartite Assignment Algortihm}\label{subsub:bipartite} 
Consider bipartite graph $H$ as explained. We solve the bipartite assignment problem (defined above) in $H$ in a rank by rank basis, starting with the largest possible rank $\ceil{\log n}$ (of blue nodes), and going down in ranks until reaching rank $1$. We spend $\Theta(\log^4 n)$ rounds on each rank. Let us consider the case of a bipartite assignment for blue nodes of rank $i$ in graph $H$, assuming that ranks greater than $i$ are already solved. 

We first identify the red neighbors of the blue nodes with rank $i$. This is done by using $\Theta(\log n)$ phases of the Decay protocol where blue nodes of rank $i$ transmit. This identifies the desired red nodes as every such red node receives at least one message with high probability and no other red node receives any message. From now on, throughout the procedure for rank $i$, only these red nodes are active. Now the algorithm is divided into $\Theta(\log n)$ epochs. Each epoch consists of three stages as follows:

\medskip

\begin{adjustwidth}{0.0em}{0em}
\begin{itemize}
\item[\textbf{Stage I}:] Call a blue node $u$ of rank $i$ a \emph{loner} if $u$ has exactly one active red neighbor. We first detect the loner blue nodes. For this, in one round, each active red node transmits a message. Only loner blue nodes receive a message and each other blue node receives a collision. We then use $\Theta(\log n)$ phases of the Decay protocol, where each blue loner tries transmitting. This with high probability informs all red nodes that are connected to at least one loner blue node. We call these red nodes \emph{loner-parents}. 

\medskip
\item[\textbf{Stage II}:] This stage is divided into three parts, and each red node is active in only one of the parts. Loner-parents, which we identified in the stage I, are active only in part 1. Each other active red node randomly and uniformly decides to be either \emph{brisk} or \emph{lazy}, which respectively mean it is active in part 2 or in part 3. These parts are as follows:  

\medskip
\begin{adjustwidth}{1.0em}{0em}
\begin{enumerate}
\item[\textbf{Part 1.}] Loner-parents use a \emph{recruiting protocol}. During this recruiting protocol, each blue neighbor of each red loner-parent get recruited with high probability. These assignments are \emph{permanent}. All the blue nodes that are recruited become inactive for the rest of the assignment problem.

\item[\textbf{Part 2.}] Brisk red nodes run a Recruiting protocol. Then, each blue node that is not the only recruited child of its parent considers its parent as its \emph{permanent} GST parent and becomes inactive permanently (for the GST construction). The other recruited blue nodes become inactive only for the remainder of this epoch, but these assignments are \emph{temporary} and the related nodes restart in the next epoch, ignoring their temporary assignments.
\item[\textbf{Part 3.}] We repeat the procedure of part 2, but this time with lazy red nodes and with the active blue nodes that did not get recruited in parts 1 or 2.
\end{enumerate}
\end{adjustwidth}

\medskip
\item[\textbf{Stage III}:] Let us say that a red node is \emph{marked} if it was a loner-parent or if it recruited zero or strictly more than one blue nodes in parts 2 or 3. Each marked red node becomes inactive after this epoch. Thus, the only red nodes that remain active after this epoch are those that do not have any loner neighbor and recruited exactly one child in part 2 or 3 of the stage II. Each marked red node knows whether it recruited zero, one, or at least two children (in stage II). We use this knowledge to rank these marked red nodes giving them rank of $i$ if they recruited exactly one blue child and rank of $i+1$ if they recruited more than one blue child. Blue children of marked red nodes also know that their parents of marked and they can also compute the rank of their parents (refer to property (c) of \Cref{lem:recruit}).

\ \ \ Before inactivating the marked red nodes, we do one simple thing: marked red nodes run $\Theta(\log n)$ phases of the Decay protocol sending their id and rank. Each blue node of any rank strictly lower than $i$ that receives a red node id considers the first red node that it heard from as its permanent GST parent, records the id and rank of that red parent, and then, becomes inactive for the rest of the assignment problem.

\end{itemize}
\end{adjustwidth}
\smallskip

\noindent After running the bipartite assignment algorithm for all the ranks, if a red node $v$ has no child, then $v$ is a leaf and in the GST, $v$ gets rank $1$.
\medskip
\bigskip

Figure \ref{fig:Assignment} shows an example of assignments during an epoch (the first epoch). The green arrows in the leftmost part indicate the loner blue nodes at the start of the epoch. The loner parent red nodes are indicated by a number $1$ next to them, meaning they are active in part 1. Brisk and lazy red nodes are respectively indicated by numbers $2$ and  $3$, next to them. The smaller nodes present the (temporarily or permanently) deactivated nodes. The green dashed lines show the permanent assignments and the (thicker) orange dashed lines show the temporary assignments. After the end of epoch, nodes with temporary assignment are re-activated. The graph remaining after the first epoch is presented on the right side of the Figure \ref{fig:Assignment}, by solid blue lines.

\begin{figure*}[t]
	\centering
		\includegraphics[width=\shortOnly{0.98\textwidth}\fullOnly{\textwidth}]{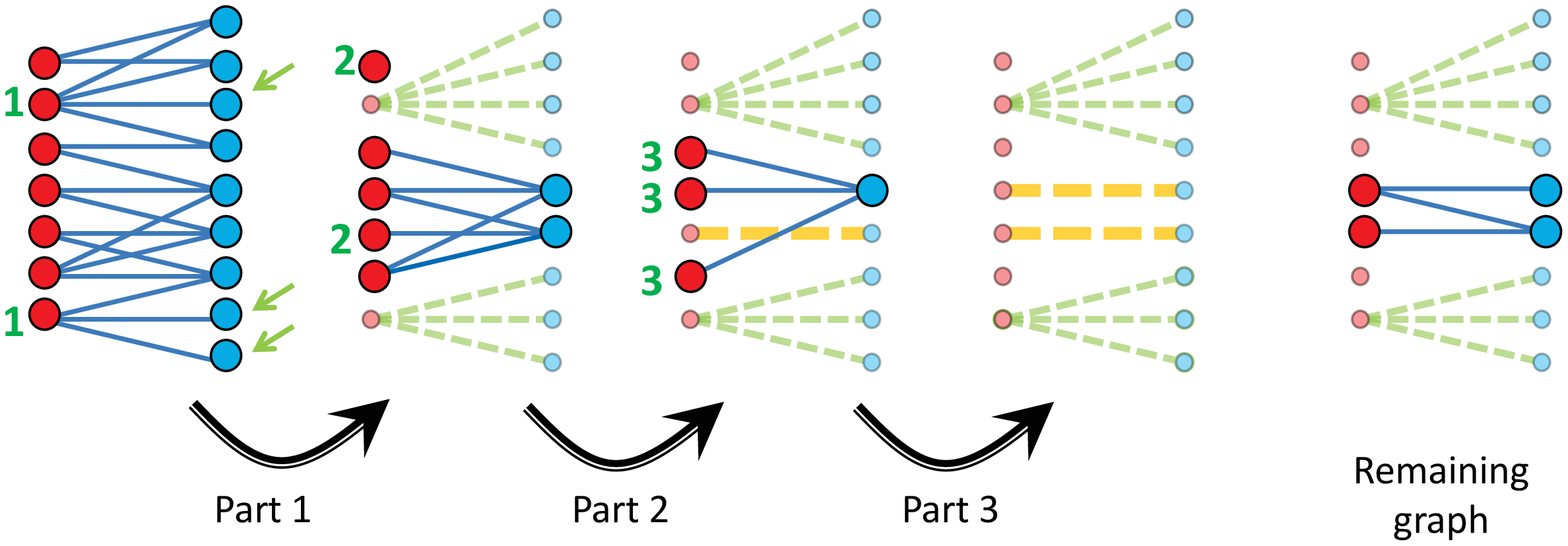}
	\caption{Parts $1$, $2$, and $3$ of the stage II of the first epoch of the assignment algorithm, and the graph remaining after the first epoch}
	\label{fig:Assignment}
\end{figure*}


\medskip
\paragraph{Analysis:} In \Cref{{lem:expShrinkage}}, we prove that in each of the $\Theta(\log n)$ epochs except the first one, we reduce the size of the assignment problem for rank $i$ by at least a constant factor, with at least a positive constant probability. Here, by size of the assignment problem, we mean the number of the active red nodes with a blue neighbor of rank $i$. A standard Chernoff bound then shows that in $\Theta(\log n)$ epochs, each blue node of rank $i$ has a parent. It is clear that the parents are ranked according to the ranking rules of GST and nodes know their own rank, the id of their parents, and the rank of their parents. We show in \Cref{lem:GSTcollisiotn-free} that with high probability, the assignment is collision-free. 
\begin{lemma}\label{lem:expShrinkage} In each epoch $j' \leq 2$, with a probability at least $1/7$, the number of remaining active red nodes for the next epoch goes down with a factor at least $8/7$.
\end{lemma}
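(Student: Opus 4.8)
The plan is to track $A$, the number of active red nodes with a rank-$i$ blue neighbor at the start of the epoch, and to show that the count surviving into the next epoch is at most $\tfrac{7}{8}A$ with probability at least $\tfrac17$. First I would use Stage~I together with Part~1 to clean up the instance: by \Cref{lem:recruit}, w.h.p.\ every loner-parent recruits all of its blue neighbors, so every loner-parent is marked (hence inactivated), and every blue node that remains active afterwards has at least two active red neighbors, all of which are non-loner-parents (brisk or lazy). Write $A = |L| + |R|$, where $L$ is the set of loner-parents and $R$ the set of remaining counted reds. Since the reds that stay active for the next epoch all lie in $R$, if $|L| \ge \tfrac18 A$ the next count is already at most $|R| = A - |L| \le \tfrac78 A$ (w.h.p., with no use of the coins). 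So the only interesting case is $|L| < \tfrac18 A$, where $|R| > \tfrac78 A$ and it suffices to show that the number of surviving reds is at most $\tfrac78|R|$.

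The reds that remain active into the next epoch are exactly those that recruit \emph{exactly one} child in Part~2 or Part~3, since every other counted red is marked (loner-parents, and reds recruiting $0$ or $\ge 2$ children). Call these the \emph{survivors} and let $S$ denote their number. I would reduce the whole lemma to the single expectation bound $E[S] \le \tfrac34|R|$. Indeed, by Markov's inequality $P\!\left[S \ge \tfrac78|R|\right] \le \tfrac{E[S]}{(7/8)|R|} \le \tfrac67$, so $P\!\left[S < \tfrac78|R|\right] \ge \tfrac17$; combined with $\tfrac78|R| \le \tfrac78 A$ and conditioning on the (w.h.p.) correctness of the recruiting subroutines, this yields precisely the stated $\tfrac87$-factor shrinkage with probability at least $\tfrac17$. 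This is exactly where the constants $\tfrac17$ and $\tfrac87$ originate.

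The core is therefore to prove that at least a quarter of the reds in $R$ are \emph{resolved} (recruit $0$ or $\ge 2$ children) in expectation. The only randomness available is the independent fair brisk/lazy coin of each red in $R$; crucially, \Cref{lem:recruit} guarantees only that each active blue node is recruited by \emph{some} active neighbor in its part, not which one, so the estimate must survive a worst-case tie-breaking of the recruiting. The mechanism I would exploit is \emph{forcing}: a blue node whose active part contains exactly one of its red neighbors is assigned to that neighbor no matter how ties are broken, and because every blue node has at least two neighbors this event occurs with constant probability for each blue node. A forced blue node resolves a red in an adversary-proof way: either it (possibly together with a second forced child) drives its recruiter to load $\ge 2$, or, as in a ``cherry'' $v - u - v'$, it starves the out-of-part neighbor to load $0$. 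I would make this quantitative by charging each resolved red to a distinct witnessing blue node, lower-bounding the firing probability of each charge by a constant using only the degree-$\ge 2$ property and the independence of the coins, and summing by linearity of expectation to get $E[\#\{\text{resolved}\}] \ge \tfrac14|R|$, i.e.\ $E[S] \le \tfrac34|R|$.

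The main obstacle is precisely this last step. Because the recruiting assignment is not under our control, a naive per-red ``resolved with constant probability'' estimate can be defeated: an adversary routing all non-forced blue nodes to hand each red exactly one child would, when $|B| \approx |R|$, make almost every red a survivor via a near-injective assignment. The randomness of the split is exactly what forbids this, by creating forced incidences the adversary cannot reroute, but converting ``each blue is forced with constant probability'' into ``a constant fraction of reds are resolved'' requires controlling the correlations introduced by blue nodes sharing red neighbors, as well as the two regimes (low blue-degree, where forcing produces load-$\ge 2$ collisions, versus high blue-degree, where competition starves reds to load $0$). I expect the cleanest route is an injective charging scheme that assigns to a $\tfrac14$-fraction of the reds \emph{private} forced witnesses, so that distinct resolved reds are charged to distinct blue nodes and no double counting occurs; establishing that such a charging exists with the right constant is the technical heart of the proof.
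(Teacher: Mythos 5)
Your overall skeleton matches the paper's at the top level: bound the expected number of surviving red nodes by $\tfrac{3}{4}$ of the count at the start of the epoch, and convert this into the stated ``$8/7$ shrinkage with probability $1/7$'' via Markov's inequality, after using Stage~I and Part~1 (via \Cref{lem:recruit}) to ensure that every blue node still active at the start of Part~2 has at least two active, non-loner-parent red neighbors. But the proposal has a genuine gap exactly where you flag it: the expectation bound itself. You reduce it to showing that in expectation a $\tfrac{1}{4}$-fraction of the reds in $R$ are \emph{resolved} (recruit $0$ or $\ge 2$ children), and you propose to prove this by an injective charging of resolved reds to ``forced'' blue witnesses --- but you never construct this charging, and you yourself explain why naive per-red resolution estimates are defeated by adversarial tie-breaking inside the recruiting subroutine. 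As written, the technical heart of the lemma is asserted rather than proved. (A secondary issue: your case split on $|L|$ changes the denominator from $A$ to $|R|$, and the target $E[S]\le\tfrac{3}{4}|R|$ is strictly stronger than what the paper's own estimate yields in the regime $|R|>\tfrac{7}{8}A$; working with $\eta=A$ throughout, as the paper does, makes the case split unnecessary.)

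The paper's proof never shows that any individual red node is resolved, which is how it sidesteps the tie-breaking/charging difficulty entirely. It observes that a surviving red recruited exactly one child, either in Part~2 (so it is brisk) or in Part~3 (so its unique child is a blue node still active in Part~3); hence the number of survivors is at most the number of brisk reds plus the number of blue nodes active in Part~3, since lazy survivors inject into their Part-3 children. The first term has expectation at most $\eta/2$. For the second, the paper uses a fact your proposal never invokes and which is precisely where the hypothesis $j'\ge 2$ enters: at the start of every epoch after the first, the surviving reds and blues are in a temporary one-to-one correspondence, so the number of active blues is at most $\eta$. Each such blue has at least two red neighbors after Part~1, each independently lazy with probability $1/2$, and is recruited in Part~2 w.h.p.\ whenever some neighbor is brisk; so it reaches Part~3 with probability at most $1/4$, giving expectation at most $\eta/4$ for the second term and $E[S]\le\tfrac{3\eta}{4}$ overall, with no combinatorial witness construction needed. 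To repair your write-up, either carry out the charging argument in full (and verify the constant survives the correlations you identify), or, more simply, replace the ``resolved fraction'' target with this brisk-plus-Part-3-blues decomposition.
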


\begin{proof}Consider epoch $j' \geq 2$ and let $\eta$ be the number of active red nodes at the start of this epoch. We show that the expected number of red nodes that remain active at the end of this epoch is at most $\frac{3\eta}{4}$. This is enough for the proof because with this, and by Markov's inequality, we get that with probability at least $1/7$, the number of active remaining red nodes at the end of this epoch is at most $\frac{7\eta}{8}$. 

Each red node remains active after epoch $j'$ only if it gets a temporary assignment, i.e., if it is not a loner-parent and it recruits exactly one child during parts 2 and 3 of Stage II. Thus, the expected number of red nodes that remain active is at most equal to the expected of number of brisk red nodes (those that act in part 2) plus the number of blue nodes that are active in part 3. The expected number of brisk red nodes is at most $\frac{\eta}{2}$. To complete the proof, we show that the expected number of blue nodes that remain active for part 3 (after the assignments of part 2) is at most $\frac{\eta}{4}$. 


After each epoch, the only red nodes that remain active are those that have a temporary assignment, i.e., those that each have recruited exactly one child and that child is not a loner. Moreover, the only active remaining blue nodes are those blue nodes temporarily matched to the remaining red nodes. Thus, after each epoch, the number of remaining active red nodes and the number of remaining active blue nodes are equal. From this, we can conclude that since $j'\geq 2$, at the start of epoch $j'$, the number of active blue nodes is at most $\eta$. 

Using \Cref{lem:recruit}, we infer that in part 1 of stage II, each blue neighbor of a loner-parent is w.h.p. recruited by a red loner-parent. Thus, in particular, each loner is recruited with high probability. Hence, at the start of part 2 of stage II, each remaining active blue node has at least $2$ red node neighbors. Since each non-loner-parent red node is active in part 2 of stage II with probability $1/2$, and because in part 2 of stage II each active blue node that has an active red node neighbor gets recruited with high probability (by \Cref{lem:recruit}), each blue node remains active after part 2 of stage II with probability at most $1/4$.  We know that because of the previous paragraph, the number of active remaining blue nodes at the start of part 2 of stage II is at most $\eta$. Hence, the expected number of blue nodes remaining active after part 2 is at most $\frac{\eta}{4}$. This completes the proof of the lemma.
\end{proof}

\begin{lemma}\label{lem:GSTcollisiotn-free} With high probability, the bipartite assignment algorithm creates a collision-free assignment.
\end{lemma}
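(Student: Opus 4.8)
The plan is to reduce the lemma, rank by rank, to an \emph{induced-matching} statement and then to verify that statement deterministically once we condition on a high-probability ``good event''. Recall that collision-freeness asks, for every rank $r$ and level $l$, that the set $\mathcal{M}$ of edges joining a blue (level-$l$) node $u$ of rank $r$ to its red (level-$(l-1)$) parent $v$ of rank $r$ be an induced matching of $H$; equivalently, for any two such edges $(u_1,v_1),(u_2,v_2)$ with $v_1\neq v_2$ we must forbid the $H$-edges $v_1u_2$ and $v_2u_1$. Since the assignment is solved rank by rank, it suffices to fix a rank $i$ and show that $\mathcal{M}=\mathcal{M}_i$ is induced. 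I would define the good event as the intersection, over all $O(\log^2 n)$ invocations, of the success events of the recruiting protocol (\Cref{lem:recruit}) and of the loner-informing Decay phases (\Cref{lem:decay}); a union bound makes this event hold w.h.p., while the one-round loner-\emph{detection} is deterministic. On the good event every loner-parent recruits \emph{exactly} its active blue neighbours and loner-parents are identified correctly, so from here the argument is purely combinatorial.

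The crux is a structural characterization of the edges of $\mathcal{M}_i$. I claim that if a red node $v$ ends up with rank $i$ and a rank-$i$ child $u$, then $u$ was recruited by $v$ in Part~1 of the unique epoch $E$ in which $v$ becomes \emph{marked}, with $v$ a loner-parent whose only active blue neighbour at the start of $E$ is $u$, and with $u$ a loner whose only active red neighbour is $v$. The justification is bookkeeping on the algorithm: a red node accumulates \emph{permanent} children only in the single epoch in which it is marked, and a \emph{single} permanent child can arise only in Part~1, because a lone recruited child in Parts~2--3 is merely \emph{temporary} (the node stays active for the next epoch). Hence ``$v$ has rank $i$'' forces $v$ to be a loner-parent recruiting exactly one active blue neighbour in Part~1; on the good event that neighbour is precisely $u$, and a loner-parent with a unique blue neighbour forces that neighbour to be a loner, so $u$'s only active red neighbour at $E$ is $v$.

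With this in hand the induced-matching property follows from a short timing argument. Take $(u_1,v_1),(u_2,v_2)\in\mathcal{M}_i$ with marking epochs $E_1,E_2$, and suppose toward a contradiction that $v_1u_2\in H$. Because $v_1$'s only active blue neighbour at the start of $E_1$ is $u_1$, the adjacent node $u_2$ must already be inactive at the start of $E_1$; since $u_2$ becomes inactive only when permanently assigned at $E_2$, this forces $E_2<E_1$. But at epoch $E_2$ the node $u_2$ is a loner whose only active red neighbour is $v_2$, whereas $v_1$ is still active at $E_2$ (it is marked only at $E_1>E_2$) and $v_1\neq v_2$ is adjacent to $u_2$ --- contradicting that $u_2$ has a single active red neighbour at $E_2$. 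Hence $v_1u_2\notin H$, and by the symmetric argument $v_2u_1\notin H$, so $\mathcal{M}_i$ is an induced matching.

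I expect the main obstacle to be the structural claim of the second paragraph rather than the final counting: one must track precisely which recruitments are permanent versus temporary and confirm that a rank-$i$ parent with a single rank-$i$ child can \emph{only} be produced by a Part~1 loner-parent recruitment, so that both endpoints of every $\mathcal{M}_i$-edge have active-degree one at the instant of assignment. A secondary, easy-to-slip point is the strict-versus-nonstrict epoch bookkeeping (whether ``inactive at $E_1$'' means before or during $E_1$); I would pin down that loner-parents act in Part~1 on the blue nodes active \emph{at the start} of the epoch, which yields the strict inequality $E_2<E_1$ and removes the need for a separate $E_1=E_2$ case. Finally I would note that cross-rank interactions cannot create collisions, since collision-freeness constrains only same-rank pairs and the Stage~III Decay attaches only blue nodes of rank strictly below $i$.
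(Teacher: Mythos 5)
There is a genuine gap, and it sits exactly where you predicted: in the structural claim of your second paragraph. The assertion that a red node $v$ ending up with rank $i$ must be ``a loner-parent whose \emph{only} active blue neighbour at the start of $E$ is $u$'' is false. \Cref{lem:recruit} guarantees that every active blue node adjacent to a loner-parent is recruited by \emph{some} adjacent loner-parent, not that each loner-parent recruits \emph{all} of its active blue neighbours (it cannot: a blue node adjacent to two loner-parents gets exactly one of them as parent). So $v$ recruiting exactly one child in Part~1 --- which is indeed the only way to acquire a single permanent rank-$i$ child, as you correctly argue --- does not imply $v$ has a unique active blue neighbour; $v$ may well be adjacent to other active blue nodes that happen to be recruited by other loner-parents. (Your conclusion that the one recruited child $u$ must be a loner with unique active red neighbour $v$ is still correct, but it follows from ``$v$'s loner neighbour can only be recruited by $v$, and $v$ recruits only one node,'' not from uniqueness of $v$'s blue neighbourhood.)

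This false premise is what drives your timing argument: from it you conclude that $u_2$, being adjacent to $v_1$, is already inactive at the start of $E_1$ and hence $E_2<E_1$, after which your contradiction at $E_2$ is fine. But the valid facts force the \emph{opposite} ordering: since $u_2$ is a loner at the start of $E_2$ with sole active red neighbour $v_2\neq v_1$, the node $v_1$ must already be inactive then, i.e.\ $E_1<E_2$ --- the one case your argument does not rule out. Closing it requires the step your proof omits, which is the paper's key move: at the start of $E_1$ the node $u_2$ is still active (permanent deactivation of blue nodes of rank $i$ happens only upon permanent recruitment, at $E_2$) and is adjacent to the loner-parent $v_1$ active in Part~1; hence by \Cref{lem:recruit} $u_2$ is w.h.p.\ \emph{permanently} recruited by some loner-parent already in Part~1 of $E_1$. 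That recruiter must be its actual parent $v_2$, so $v_2$ is a loner-parent at $E_1$ and also recruits its own loner neighbour, which is distinct from $u_2$ (as $u_2$ then has two active red neighbours and is no loner); thus $v_2$ acquires two rank-$i$ children and rank $i+1$, contradicting $v_2$ having rank $i$. Your reduction to a per-rank induced matching, the good-event conditioning, and the ``permanent single child only via Part~1'' bookkeeping all match the paper and are sound; the missing ingredient is this use of the recruiting guarantee at epoch $E_1$ rather than a (false) uniqueness property of $v_1$'s neighbourhood.
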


\begin{proof} We show that if there exist blue nodes $u_1$ and $u_2$ ($u_1 \neq u_2$) and their respective red parents $v_1$ and $v_2$ ($v_1 \neq v_2$), all four with rank $i$, then with high probability, $H$ must not have any edge between $u_2$ and $v_1$, or between $u_1$ and $v_2$. For the sake of contradiction, and without loss of generality, suppose that there is an edge between $u_2$ and $v_1$. \fullOnly{\Cref{fig:Collision} shows the configuration of these four nodes.
\begin{figure}[h]
	\centering
		\includegraphics[width=0.35\textwidth]{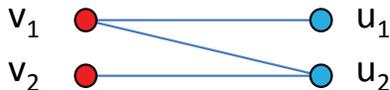}
	\caption{Collision-freeness proof}
	\label{fig:Collision}
\end{figure}
}
Since $v_2$ and $u_2$ have rank $i$, blue node $u_2$ must have been a loner when $v_2$ recruited it. Thus, $v_2$ recruited $u_2$ after $v_1$ became inactive. Hence, in the epoch that $v_1$ recruited $u_1$, $u_2$ was active. Therefore, using \Cref{lem:recruit} we get that in the part 1 of the epoch in which $v_1$ recruited $u_1$, $u_2$ must have been w.h.p. recruited by either $v_1$ or some other loner-parent. Since $v_2\neq v_1$ recruited $u_2$, we get that $v_2$ must have been that other loner parent. This means that at that time, $v_2$ had a loner child ($\neq u_2$) and thus, $v_2$ has recruited more than one child of rank $i$. This means that $v_2$ must have had rank $i+1$ which contradicts with the assumption that $v_2$ has rank $i$. 
\end{proof}

\subsubsection{Pipelining the GST Construction}\label{subsub:mod} Note that in the algorithm described in \Cref{subsub:bipartite} where we are working on the assignment problem between levels $l-1$ and $l$, once we are done with the assignment problem of ranks $i$ and $i-1$, nodes of level $l-1$ that receive rank $i$ are already determined, i.e., no other node in level $l-1$ will receive rank $i$. Thus, we can solve the two problems of rank $i-2$ assignment between levels $l-1$ and $l$ and rank $i$ assignments between levels $l-2$ and $l-1$, essentially simultaneously, by interleaving them in even and odd rounds. Using the same idea, it is easy to see that one can pipe-line the assignment problems of different ranks between different levels. Then, the assignment problem between levels $l-1$ and $l$ starts after $\Theta((D-l) \log^4 n )$ rounds. Thus, the assignment problem of largest possible rank between levels $0$ and $1$ starts after $\Theta(D \log^4 n)$ rounds. The largest rank is at most $\ceil{\log n}$. Since each rank takes $\Theta(\log^4 n)$ rounds, the whole GST construction problem finishes after $\Theta(D \log^4{n})$ rounds.


\subsection{Unknown Topology Single-Message Broadcast in $O(D + \log^6 n)$ Rounds}\label{subsec:single-message}

\begin{theoremR}{thm:singlebcast}
In radio networks with unknown topology and with collision detection, there is a randomized distributed algorithm that broadcasts a single message in $O(D + \log^6 n)$ rounds, with high probability.
\end{theoremR}
\begin{proof}
We first use a wave of collisions to get a BFS layering in time $D$. That is, the source transmits in all rounds $[1, D]$, and each node $v$ transmits in all rounds $[r, D]$ where $r$ is such that $v$ receives a message or a collision in round $r-1$. For each node $v$, the round $r-1$ in which $v$ receives the first message or collision determines distance of $v$ from the source. 

\smallskip
Having this BFS layering, we decompose the graph into $O(\log^4 n)$ rings, each consisting of $D' = {D}/{\log^4 n}$ consecutive layers of the BFS layering. 

\smallskip
Then, we compute a gathering spanning tree for each of the rings in $O(D' \log^4 n) = O(D)$ rounds. Note that computation of a GST for each ring only depends on $D'$ which is the number of BFS layers that the ring contains, and that given the BFS-layering, the computation of the GSTs of all rings is performed in parallel. 

\smallskip
Having these GSTs, broadcasting the message inside each ring takes $O(D'+\log^2 n)$ rounds, using \cite{GPX05}. Finally, we use $O(\log^2 n)$ rounds of the Decay protocol \cite{BGI} to propagate the message from the outer boundary of one ring to the inner boundary of the next ring. Since there are $O(\log^4 n)$ rings, the whole broadcast takes $\big(O(D'+\log^2 n) + O(\log^2 n)\big) \cdot O(\log^4 n) = O(D+ \log^6 n)$ rounds.
\end{proof}


\section{Multi-Message Broadcast}\label{sec:MultiMessage}


In this section, we show the following two results:

\begin{theoremR}{thm:multipleBcastKnown}
In radio network with known topology (even without collision detection), there is a randomized distributed algorithm that broadcasts $k$ messages in $O(D + k \log n + \log^2 n)$ rounds, with high probability. 
\end{theoremR}

\begin{theoremR}{thm:multipleBcastUnknown}
In radio networks with unknown topology and with collision detection, there is a randomized distributed algorithm that broadcasts $k$ messages in $O(D + k \log n + \log^6 n)$ rounds, with high probability.
\end{theoremR}

In Subsections \ref{subsec:challenges} to \ref{subsec:analysisMMBA}, we present and analyze the algorithm that achieves \Cref{thm:multipleBcastKnown}. We remark that the $O(D + k \log n + \log^2 n)$ round-complexity of \Cref{thm:multipleBcastKnown} is optimal, given the $\Omega(k \log n)$ lower bound of \cite{NCLB} for $k$-message broadcast, the $\Omega(\log^2 n)$ lower bound of \cite{ABLP} for single message broadcast, and the trivial $\Omega(D)$ lower bound. 

\smallskip
Furthermore, it is easy to combine the known topology algorithm of \Cref{{thm:multipleBcastKnown}} with the ideas of the proof of \Cref{thm:singlebcast} (i.e., breaking the graph into rings of radius $\ceil{\frac{D}{\log^4 n}}$) and the standard technique of grouping messages and pipe-lining the groups, to prove \Cref{thm:multipleBcastUnknown}. We present the details of this part in \Cref{subsec:MMB-Unknown}.


\subsection{Challenges in Broadcasting Multiple Messages}\label{subsec:challenges}

Given the known transmission schedules for broadcasting a single message in optimal $O(D+\log^2 n)$ time on top of a GST, it is intriguing to try to use the same transmission schedule to solve the multi-message broadcast problem. However, since we cannot disjoin the spreading process of different messages, this approach faces two challenges:

Firstly, when a node $v$ has already learned multiple messages and is triggered by the schedule to transmit, $v$ needs to decide which message to forward. Choosing one message over the others can slow down the progress of those other messages. Fortunately, random linear network coding (RLNC) \cite{HKMKE} provides a general technique for making such decisions: Instead of deciding on one specific message whenever $v$ is triggered to send, it transmits a random linear combination of all packets it has received. It has been shown that this is the universal optimal strategy, that is, this succeeds with high probability as soon as it was possible (in hindsight) to send $k$ messages to each of the receivers~\cite{OptNC}. There are furthermore indications that network coding might be necessary for obtaining an asymptotically optimal throughput performance~\cite{NCLB}. Our multi-message broadcast utilizes RLNC and uses recent advances in analyzing RLNC performance~\cite{Haeupler11} for the proofs. Even though RLNC and its analysis need to be carefully tailored to the radio broadcast setting here, this already gives us a good plan to remedy the first issue.

The second issue is subtle but turns out to be more problematic: When proving progress of messages, all known single-message schedules and their analyses (e.g., those of~\cite{GPX05}) rely crucially on the fact that the nodes that do not have the (single) message remain silent and cause no collisions. In a multi-message setting it becomes a necessity that we make progress for a message while allowing other nodes that do not have this message to transmit (in order to make progress on other messages). 

Trying to understand and resolve this problem prompted us to define the property of a transmission schedule being \emph{multi-message viable (MMV)}: 
\begin{definition}
We say that a transmission schedule broadcasts one message in a multi-message viable (MMV) way in $T$ rounds with probability $1-\delta$ if the following holds: Suppose that we use this transmission schedule but nodes that do not have the message but are scheduled to transmit send ``noise''. Then, the message is broadcast to all nodes in $T$ rounds with probability $1-\delta$. 
\end{definition}
Intuitively, this notion captures the viewpoint where we focus on one message and the transmissions of the other messages are regarded as noise, possibly harming the progress of the message in consideration. We later see that this notion is enough to prove that a schedule works well with RLNC. 

Unfortunately proving that a schedule is MMV is not straightforward and it is a priori not clear whether the already existing schedules are MMV. The easiest example to see this is the well-known Decay protocol of~\cite{BGI}: in the classical implementation of the Decay protocol, if a node is scheduled to transmit but it does not have the message, then this node remain silent. The Decay protocol broadcasts a single message in $O(D\log n+\log^2 n)$ rounds, with high probability~\cite{BGI}. This follows almost directly from a simple progress lemma which shows that in $O(\log n)$ rounds of the protocol, a node receives the message with constant probability if at least one of its neighbors already has the message. However, if the nodes that do not have the message are allowed to send noise when the schedule prompts them to transmit, then this key progress lemma of \cite{BGI} does not hold anymore. Surprisingly, even though the progress lemma breaks, it is still true that one message is spread quickly in this case (when nodes that do not have the message are noising), meaning that the Decay protocol broadcasts in time $O(D\log n+\log^2 n)$ rounds, w.h.p., in an MMV way: 

Before formally proving this fact, first let us recall the details of the transmission schedule of the Decay protocol: 

\medskip
\noindent\fbox{
	\parbox{0.95\linewidth} {
\textbf{Transmission Schedule of the Decay protocol in an MMV Framework}: For each round $r$, for each node $v$ at distance $l_v$ from source, if $r\equiv l_v+1 \mod 3$, then $v$ is prompted to transmit with probability $2^{-((r -l_v-1)/3 \mod {\ceil{\log n}})}$. If $v$ is prompted but does not have the message, it sends ``noise''.
	}
}

\smallskip

\begin{lemma} \label{lem:DecayMMV} The Decay protocol broadcasts one message in an MMV way in $O(D\log n+\log^2 n)$ rounds, w.h.p.
\end{lemma}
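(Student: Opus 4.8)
The plan is to use the \emph{backward-analysis} idea (element (C)) rather than to track the message forward, since the forward progress lemma genuinely fails here: when a node cleanly hears a single neighbor, that neighbor may be one of the nodes sending noise. The observation that rescues the argument is that in the MMV framework \emph{every} prompted node transmits, regardless of whether it currently holds the real message; consequently the combinatorial ``who-hears-whom'' structure produced by the transmission coins is statistically independent of which nodes already hold the message. First I would record the effect of the $\bmod\,3$ offset: a node $v$ at level $l_v$ listens exactly on the rounds $r\equiv l_v \pmod 3$, and on those rounds the only neighbors of $v$ that are prompted are its level-$(l_v-1)$ neighbors (its level-$l_v$ and level-$(l_v+1)$ neighbors are scheduled on the other two residues). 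On these rounds the level-$(l_v-1)$ neighbors run a clean Decay cycle of length $\ceil{\log n}$ in the index $t=(r-l_v)/3$, so grouping $\ceil{\log n}$ consecutive listening rounds into a \emph{phase}, \Cref{lem:decay} yields: in each phase, provided $v$ has at least one level-$(l_v-1)$ neighbor, $v$ has a \emph{clean reception} (exactly one prompted neighbor heard) with probability at least $1/8$, and the events in distinct phases depend on disjoint coins.

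Fixing a target $w$ at distance $L$ and a phase budget $J$, I would build a witness path $s=p_0,p_1,\dots,p_L=w$ by walking backwards. Starting at $p_L=w$, scan its phases downward from $J$ and let $j_L$ be the largest phase with a clean reception, setting $p_{L-1}$ to the neighbor heard; then at $p_i$ scan phases strictly below $j_{i+1}$, let $j_i$ be the largest such phase with a clean reception, and set $p_{i-1}$ to the heard level-$(i-1)$ neighbor. This produces strictly decreasing indices $j_1<\dots<j_L\le J$. The clean-reception event scanned at $p_i$ depends only on the transmission coins of level-$(i-1)$ nodes; since the path visits each level once and listening versus transmitting rounds lie on different residues $\bmod\,3$, all scanned events across the whole walk use pairwise disjoint coins. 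Hence, conditioned on the revealed history (which fixes the current node), each newly scanned phase is an independent trial of success probability at least $1/8$, so the number of phases consumed to descend all $L$ levels is stochastically dominated by the number of $\mathrm{Bernoulli}(1/8)$ trials needed for $L$ successes. A Chernoff bound then shows $J=O(L+\log n)$ phases suffice except with probability $n^{-c-1}$ (with the constant in $J$ chosen large).

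It remains to show that the revealed path actually carries the real message, and this is exactly where the strict increase $j_i\ge j_{i-1}+1$ is used. A short round count (phase $j$ at level $l$ occupies the rounds $r=l+3t$ with $t\in[j\ceil{\log n},(j+1)\ceil{\log n})$) shows that phase $j_{i-1}$ at level $i-1$ finishes before phase $j_i$ at level $i$ begins. By induction $p_i$ holds the message by the end of phase $j_i$: the source $p_0=s$ holds it from the start, and if $p_{i-1}$ holds it by the end of phase $j_{i-1}$ then it is transmitting the genuine message throughout phase $j_i$, so in the clean-reception round of phase $j_i$ node $p_i$ receives the real message rather than noise. Thus $w$ has the message by the end of phase $j_L\le J$, i.e.\ within $O\big((L+\log n)\ceil{\log n}\big)=O(D\log n+\log^2 n)$ rounds, and a union bound over the at most $n$ targets completes the argument.

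I expect the main obstacle to be precisely the bookkeeping that makes the adaptive backward walk's success events independent: one must verify that the coins revealed while determining each $p_{i-1}$ (the level-$(i-1)$ transmission coins used in phase $j_i$) are disjoint from the coins governing the next descent step at $p_{i-1}$ (the level-$(i-2)$ transmission coins), so that each step remains a fresh $\ge 1/8$ trial even though the choice of node is adaptive. The $\bmod\,3$ scheduling together with the path being strictly level-decreasing is what guarantees this disjointness; securing that accounting, and the timing inequality that lets the real message overtake the revealed path, is the crux, whereas the concentration and union-bound steps are routine.
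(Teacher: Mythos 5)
Your proposal is correct and follows essentially the same route as the paper: a backwards-in-time unraveling of a collision-free reception chain from the target to the source, exploiting that in the MMV framework the who-hears-whom structure is independent of message possession, followed by the observation that once the chain reaches $s$ the transmissions along it must have carried the real message. The paper phrases the descent via a potential $\Phi(t)=\min_{u\in S_t}\mathrm{dist}_G(s,u)$ over the set of transmission-connected nodes rather than an explicit witness path, but the per-phase $1/8$ progress bound, the Chernoff concentration, and the union bound are the same; your extra bookkeeping on coin disjointness and the timing inequality only makes explicit what the paper leaves implicit.
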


To prove this lemma, we need to go away from the analysis approach in~\cite{BGI} which chooses a shortest path from source $s$ to node $v$ and shows that the broadcast message makes fast progresses along this path when moving forwards in time. Instead we use what we call \emph{backwards analysis}: In a nutshell, we move backwards in time and find a sequence of collision-free transmissions from $s$ to $v$, where hops of this sequence are unraveled backwards (from $v$ to $s$). Meanwhile unraveling this sequence, each of these transmission can be the broadcast message or just ``noise'', depending on whether the sender has received the broadcast message or not. Once we reach $s$, it means the transmissions in the sequence indeed where the broadcast message. 

\begin{proof}[Proof of \Cref{lem:DecayMMV}]
Fix an arbitrary node $v$. Let $T=\lambda(D\log n+\log^2 n)$ for a large enough constant $\lambda$. For each integer $t$, we say node \emph{``$u$ is transmission-connected to $v$ by backwards time $t$"} if there is a timely sequence of transmissions $u=w_1, w_2, \dots w_\ell=v$ where for each $i\in [1, \ell-1]$, $w_i$ transmits in a round $r_i \in[T-t, T]$, we have $r_{i} < r_{i+1}$, and in round $r_i$ where $w_i$ transmits, $w_{i+1}$ receives a message from $w_i$. We emphasize that these transmission do not consider where the transmitted message is just ``noise'' or it is the actual message of the broadcast problem. If node $w_i$ has received the message of broadcast by the end of round $r_i-1$, then the transmission of $w_i$ in round $t_i$ is the actual message of the broadcast; otherwise, it is noise. Let $S_t(v)$, or simply $S_t$, be the set of all nodes that are transmission-connected to $v$ by backwards time $t$. For each backwards time $t$, define potential $\Phi(t) = \min_{u\in S_t} dist_{G}(s, u)$. We claim that ``\emph{for each two backwards times $t, t'>t$ such that $t'-t = 3\ceil{\log n}$, if $\Phi(t)\geq 1$, with probability at least $1/(2e)$, we have $\Phi(t') \leq \Phi(t)-1$}". A Chernoff bound then shows that with high probability $\Phi(T)=0$ meaning $s \in S_{T}$. This shows that, with high probability, there exists a sequence of collision-free transmissions (and message receptions) which starts in source $s$ and ends in node $v$ by time $T$, proving that, with high probability, $v$ receives the message of $s$ by time $T$. 

To prove the claim, consider two times $t, t'>t$ such that $t'-t = 3\ceil{\log n}$ and $\Phi(t)\geq 1$. Let $u^*$ be a node $u$ in $S_t$ that minimizes $dist_{G}(s, u)$. We show that in round interval $[T-t', T-t]$, with probability at least $1/(8)$, $u^*$ receives at least one message (be it noise or the actual broadcast message) from a neighbor $u'$ such that $dist(s,u') = dist(s,u^*)-1$. Let $k$ be the number of neighbors $u'$ of $u^*$ such that $dist(s,u') = dist(s,u^*)-1$. Consider the round $r^* \in [T-t', T-t]$ such that $(r^*-dist(s,u^*))/3 \equiv \ceil{k} \mod \ceil{\log n} $. In that rounds, only the only neighbors of $u^*$ that can transmit are those neighbors $u'$ that have $dist(s,u') = dist(s,u^*)-1$. The probability that $u^*$ receives a message from one of them is $\frac{k}{2^{-\ceil{k}}} (1-\frac{1}{2^{-\ceil{k}}})^{k-1} \geq \frac{1}{8}$. This proves the claim. 

A union bound over all nodes $v$ shows that with high probability, all nodes receives the message by round  $O(D\log n+\log^2 n)$.
\end{proof}

Unfortunately, in contrast to the transmission schedule of the Decay protocol, the GST based schedule of~\cite{GPX05} appears to be not MMV. In \Cref{sec:schedule}, we present a new transmission schedule for GSTs and again use our backwards analysis to show that this schedule is MMV. Lastly, we show that if one combines RLNC with this new schedule, then the MMV property almost directly translates into having a high broadcast throughput, leading to the optimal broadcast time of $O(D + k\log n + \log^2 n)$ rounds for $k$ messages. 

 \subsection{A Multi-Message Transmission Schedule Atop GST}\label{sec:schedule}
In this section, we present our transmission schedule for GSTs and show that it is MMV. Later we use this schedule along with random linear network coding to achieve our optimal multi-message algorithm.

\subsubsection{The Schedule}

Suppose we have a GST $T$ for graph $G$. For each node $u$, let $l_u$ be the distance of $u$ from source $s$ in graph $G$ (that is, the BFS level of $u$). Also, let $r_u$ be the rank of $u$ in GST $T$. We first construct a virtual directed graph $G'$, from graph $G$, as follows: we add a directed edge from every node $u$ with rank $r$ that is the first node of a fast stretch to every descendant of $u$ in $T$ that has rank $r$ (thus, to all nodes in that fast stretch). We call this a \emph{fast} edge. We use the notation $d_u$ to denote the length of the shortest (directed) path from $s$ to $u$ in $G'$, and we call this \emph{virtual-distance}. Given graph $G$, GST $T$, and the respective virtual graph $G'$ (and the related virtual-distances), our schedule is defined as follows:

\bigskip

\noindent\fbox{\parbox{0.95\linewidth}{
\textbf{Multi-Message Viable GST Schedule}: In round $t$, each node $u$ at BFS-level $l$ of $G$ with rank $r$ in GST $T$ and virtual-distance $d$ in the virtual graph $G'$ does as follows: 
(a) if $t \equiv 2(l + 3r) \pmod{6\ceil{\log_2 n}}$, then $u$ transmits; (b) if $t \equiv 1 + 2d \pmod{6})$, then $u$ transmits with probability $2^{-((t-1-2d)/6 \mod \ceil{\log_2 n})}$; otherwise, $u$ listens.
}}

\medskip

Note that the case (a) only happens in even rounds and case (b) happens only in odd rounds. As in~\cite{GPX05}, we call the transmissions triggered by case (a) \emph{fast transmissions} and the transmissions triggered by case (b) \emph{slow transmissions}. 

We remark that this schedule uses fast transmissions exactly as in~\cite{GPX05,FQ06} to pipeline the messages along the fast stretches of GST. We see in \Cref{lem:fastcollisionfree} that these fast transmissions are collision-free. The crucial difference with the schedule in \cite{GPX05,FQ06} lies in defining the slow transmissions with respect to the virtual-distance in graph $G'$ (instead of levels in $G$). This change results in slow transmissions not trying to push messages away from the source, but instead trying to push messages towards entry points of fast stretches (even if this leads to the message going back towards the source). While this modification seems minor, it is crucial for allowing the \emph{backwards analysis} technique to show that the new schedule is efficient and MMV. 

\subsubsection{The Analysis}

The rest of this section is dedicated to prove that the newly defined schedule is MMV:

\begin{lemma} \label{lem:GST-MMV} The MMV-GST schedule of \Cref{sec:schedule} broadcasts one message in an MMV way, in $O(D + \log n \cdot(\log n+ \log \frac{1}{\delta}))$ rounds, with probability $1-\delta$.
\end{lemma}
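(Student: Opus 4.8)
The plan is to reproduce the backwards-analysis of \Cref{lem:DecayMMV}, but with the potential measured by the \emph{virtual-distance} $d_u$ in $G'$ rather than by BFS level. Fix a target node $v$ and, exactly as in the Decay proof, define $S_t(v)$ to be the set of nodes transmission-connected to $v$ by backwards time $t$ (a timely, backwards-unraveled chain of collision-free receptions ending at $v$, where each hop may carry the genuine message or mere noise), and set $\Phi(t)=\min_{u\in S_t}d_u$. Since $d_s=0$, it suffices to show $s\in S_T$ for $T=O(D+\log n\,(\log n+\log\frac1\delta))$, as this exhibits a collision-free transmission chain from $s$ to $v$ and hence delivery of the message to $v$ by round $T$. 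A union bound over all $v$ (taking per-node failure $\delta/n$, which only turns $\log\frac1\delta$ into $\log\frac n\delta=\log n+\log\frac1\delta$) then yields the MMV broadcast guarantee.

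Two structural facts drive the progress argument, one for each transmission type. First, \emph{fast} transmissions live in even rounds and \emph{slow} ones in odd rounds, so they never interfere; moreover fast transmissions are collision-free (\Cref{lem:fastcollisionfree}), and the timing $t\equiv 2(l+3r)\pmod{6\ceil{\log_2 n}}$ makes consecutive same-rank nodes of a fast stretch fire exactly two rounds apart, so a message pipelines deterministically from the first node $w$ of a stretch to any of its same-rank descendants. Second, because every $G$-edge is also a (slow) edge of $G'$, virtual-distances of $G$-adjacent nodes differ by at most one, and since ranks are non-increasing along GST root-paths and the largest rank is at most $\ceil{\log_2 n}$, the GST supplies an $s$–$v$ walk in $G'$ of length $O(\log n)$. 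Hence a shortest $G'$-path $P$ from $s$ to $v$ decomposes into $O(\log n)$ fast edges (of total physical length $O(D)$) and $O(\log n)$ slow edges.

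The core progress claim concerns the minimizer $u^*\in S_t$ with $d_{u^*}=\Phi(t)=:d^*$, classified by the last edge of $P$ entering $u^*$. (i) If that edge is \emph{fast}, then $u^*$ lies in a stretch whose first node $w$ has $d_w=d^*-1$, and tracing the stretch backwards through its collision-free fast transmissions extends $S$ to $w$ \emph{deterministically} in at most $(\text{stretch length})+O(\log n)$ rounds, lowering $\Phi$ by one. (ii) If that edge is \emph{slow}, then $u^*$'s $G$-neighbors all have virtual-distance in $\{d^*-1,d^*,d^*+1\}$, and the residue $1+2d\pmod 6$ segregates these three classes by $d\bmod 3$; within one super-phase of $6\ceil{\log_2 n}$ rounds we may select the odd round in which only the $d^*-1$ neighbors fire and their transmission probability matches the reciprocal of their count, so the same geometric-level calculation as in \Cref{lem:DecayMMV} gives that $u^*$ receives from such a neighbor, lowering $\Phi$ by one, with constant probability.

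Finally I would assemble these into one time budget. The fast-edge reductions are deterministic and cost in aggregate $O(D)$ (total stretch length) plus $O(\log n)\cdot O(\log n)=O(\log^2 n)$ (per-stretch waiting for the correct fast slot), which I fold into $T$. What remains are the $O(\log n)$ slow-edge reductions, each a super-phase of $O(\log n)$ rounds succeeding independently with constant probability; a Chernoff bound shows $O(\log n+\log\frac1\delta)$ super-phases suffice to accumulate the needed $O(\log n)$ successes with probability $1-\delta$, contributing $O(\log n\,(\log n+\log\frac1\delta))$ rounds. Summing gives $\Phi(T)=0$ for $T=O(D+\log n\,(\log n+\log\frac1\delta))$, as claimed. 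I expect the main obstacle to be precisely this final accounting: because the deterministic length-$D$-scale fast reductions and the random $\log$-scale slow reductions are interleaved along $P$, and the attempts for a given slow edge can only begin once the backwards frontier reaches it, one must argue carefully — most cleanly by charging the deterministic fast cost up front and treating each slow window as an independent constant-probability trial — that a single Chernoff bound over the slow super-phases still governs the process despite the heterogeneous step costs. Checking that the fast and slow schedules never collide, with each other or within themselves, across this interleaving (via the even/odd split and \Cref{lem:fastcollisionfree}) is the other point demanding care.
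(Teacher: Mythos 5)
Your proposal follows the paper's backwards analysis closely — the sets $S_t$ of transmission-connected nodes, the fast/slow dichotomy at the minimizer, \Cref{lem:progslow} for the slow hops and the deterministic pipelined waves for the fast stretches are all exactly the paper's ingredients — but you diverge at the one step you yourself flag as the obstacle, and the paper's device for dissolving that obstacle is worth knowing. Instead of taking $\Phi(t)=\min_{u\in S_t} d_u$ and then separately charging deterministic length-$D$-scale fast traversals and random $\log$-scale slow super-phases, the paper uses the combined potential $\Phi(t)=\min_{u\in S_t}\bigl(d_u\ceil{\log_2 n}+l_u\bigr)$, with $\Phi(0)\le 2\ceil{\log_2 n}^2+D$. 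With this choice, \emph{every} window of $8\ceil{\log_2 n}$ backwards rounds yields, with probability at least $\frac{1}{16}$, a potential drop of at least $\ceil{\log_2 n}-1$: in the slow case because $d$ drops by one while $l$ changes by at most one, and in the fast case because either the wave is traced all the way to the stretch head (so $d$ drops by one and $l$ does not increase) or it is traced back only $\ceil{\log_2 n}$ levels within the window (so $l$ alone drops by $\ceil{\log_2 n}$). The heterogeneous step costs thus become identical Bernoulli windows and a single Chernoff bound over $O(D/\log n+\log n+\log\frac{1}{\delta})$ windows finishes the proof — there is no interleaving to account for. If you insist on your accounting, two repairs are needed. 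First, you cannot classify the minimizer $u^*$ by ``the last edge of $P$ entering $u^*$'' for a fixed shortest $s$--$v$ path $P$, since $u^*$ need not lie on $P$; use the paper's dichotomy (does $u^*$ have a $G$-neighbor of strictly smaller virtual-distance, or not). Second, ``total stretch length $O(D)$'' is not a statement about a fixed path: the stretches you trace are whichever ones the random sequence of minimizers encounters, and you must observe that along that sequence the BFS level decreases by the traversed stretch length on each fast step and changes by at most one on each of the $O(\log n)$ slow steps, so the traversed fast lengths telescope to at most $D+O(\log n)$. With those two repairs and the independent-trials Chernoff argument you sketch, your version also goes through, but the combined potential is the cleaner route.
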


Before diving directly into the proof of \Cref{lem:GST-MMV} we show a few helpful invariants.

\begin{lemma} \label{lem:virtualDistance} In virtual graph $G'$, for each node $u$, we have $d_u \leq 2\ceil{\log_2 n}$.
\end{lemma}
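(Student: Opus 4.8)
The plan is to prove the bound constructively: for every node $u$, I will exhibit a directed path from $s$ to $u$ in $G'$ that uses at most $2\ceil{\log_2 n}$ edges. Since $d_u$ is by definition the length of the shortest such path, this immediately gives $d_u \le 2\ceil{\log_2 n}$. The path will alternate between \emph{fast} edges, each of which jumps across an entire fast stretch in a single hop, and ordinary edges of $G$, each of which connects the bottom of one fast stretch to the top of the next. The point is that a fast stretch, no matter how long, costs only one hop in $G'$, so the length of the path is governed by the \emph{number} of stretches, not their total length.

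First I would look at the tree path $s = x_0, x_1, \dots, x_m = u$ from the source to $u$ in the GST $\mathcal{T}$. Recall that along any root-to-node path the ranks are non-increasing and that the largest rank is at most $\ceil{\log_2 n}$. I decompose this path into maximal fast stretches $P_1, P_2, \dots, P_t$, where $P_1$ begins at $s$, $P_t$ ends at $u$, and each $P_j$ is a maximal run of consecutive nodes of equal rank. Because the stretches are maximal and the ranks are non-increasing, the stretch ranks are \emph{strictly} decreasing as $j$ grows; being integers in $\{1, \dots, \ceil{\log_2 n}\}$, this forces $t \le \ceil{\log_2 n}$.

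Next I need to argue that each stretch can be crossed by a single fast edge. Here I would invoke the ranking rule: a node of rank $\rho$ has at most one child of rank $\rho$, since two or more same-rank children would bump its rank to $\rho+1$. Consequently the set of rank-$\rho$ descendants of the top node of $P_j$ is exactly the single path $P_j$ (rather than a branching subtree), so the fast edge that $G'$ places from the top of $P_j$ to each of its same-rank descendants reaches the bottom of $P_j$ in one hop. The connection from the bottom of $P_j$ to the top of $P_{j+1}$ is a genuine parent-child edge of $G$, which is present in $G'$. Chaining these together, one fast edge per stretch and one $G$-edge between consecutive stretches, yields a directed $s$-to-$u$ path of length at most $t + (t-1) = 2t-1 \le 2\ceil{\log_2 n}$, as required. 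Trivial single-node stretches only shorten the path, so they cause no difficulty.

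I expect the only real subtlety to be the middle step: making sure the fast-edge definition, phrased in terms of ``same-rank descendants,'' spans precisely one maximal fast stretch and not a larger branching structure. This is exactly where the at-most-one-same-rank-child consequence of the ranking rule is essential, and it is the point I would state and justify most carefully; the remaining counting is routine.
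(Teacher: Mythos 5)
Your proof is correct and takes essentially the same route as the paper's: decompose the tree path from $s$ to $u$ into maximal same-rank stretches (at most $\ceil{\log_2 n}$ of them, since ranks are non-increasing along the path and bounded by $\ceil{\log_2 n}$), cross each stretch with a single fast edge of $G'$, and link consecutive stretches by an ordinary $G$-edge, giving a path of length at most $2t-1 \leq 2\ceil{\log_2 n}$. The only difference is that you explicitly justify, via the at-most-one-same-rank-child consequence of the ranking rule, that the same-rank descendants form a single path rather than a branching structure --- a detail the paper asserts without proof.
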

\fullOnly{
\begin{proof}Consider the path from $u$ to $s$ in $T$. On this path, the rank never decreases, thus increases at most $\ceil{\log_2 n}$ times. Furthermore, every stretch on which the rank stays the same corresponds to a directed link in $G'$. Using this, we get the path of length at most $2\ceil{\log_2 n}$ from $s$ to $u$ in $G'$. 
\end{proof}
}

\begin{lemma}\label{lem:fastcollisionfree}
There are no collisions between any two fast transmissions.
\end{lemma}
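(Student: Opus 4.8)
The plan is to reduce the claim to a modular-arithmetic condition and then invoke the collision-freeness property of the GST. First I would unfold the definition of a fast transmission: by case (a) of the schedule, a node $x$ transmits in round $t$ exactly when $t \equiv 2(l_x + 3 r_x) \pmod{6\ceil{\log_2 n}}$. Hence two nodes $u_1$ and $u_2$ fast-transmit in the same round if and only if $l_{u_1} + 3 r_{u_1} \equiv l_{u_2} + 3 r_{u_2} \pmod{3\ceil{\log_2 n}}$, obtained by dividing the congruence by $2$.

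Next I would restrict attention to two such nodes that share a common neighbor $w$, since a collision is an event at a receiver. Because the levels come from a BFS layering, every edge of $G$ joins nodes whose levels differ by at most one; consequently $|l_{u_1} - l_{u_2}| \le 2$. In the congruence above the quantity $3(r_{u_1}-r_{u_2})$ is a multiple of $3$ of absolute value at most $3\ceil{\log_2 n} - 3$, while the level difference has absolute value at most $2$. A short case analysis modulo $3\ceil{\log_2 n}$ then forces both differences to vanish, so the only way two neighbors of $w$ fast-transmit simultaneously is that they have identical BFS level $l$ and identical rank $r$.

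The final and key step is to argue that such a coincidence never disrupts the pipelining along a fast stretch. I would consider the designated receiver of a fast transmission: any node $u$ that is not the first node of its stretch listens, in the round in which its stretch-parent $v$ transmits, for $v$'s message. One first checks that $u$ is indeed listening then, since its own fast-transmission round is two later and is even, whereas slow transmissions occur in odd rounds. Both $u$ and $v$ lie on the same fast stretch, so they share the rank $r = r_u = r_v$ and satisfy $l_v = l_u - 1$. By the previous paragraph, any other neighbor $u'$ of $u$ transmitting in that round must also have level $l_u - 1$ and rank $r$. But the collision-freeness property of the GST states precisely that the rank-$r$ node $u$ at level $l_u$ is adjacent to \emph{at most one} rank-$r$ node at level $l_u - 1$; hence $u' = v$, and $u$ hears no collision.

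The main obstacle I anticipate is aligning the arithmetic with the collision-freeness property: that property is rank-specific, constraining only rank-$r$ nodes adjacent to rank-$r$ nodes one level up, so it is essential to observe that a harmful collision can occur only at an intended receiver, which by construction shares its rank with the sender. Once this matching of ranks is in place, the GST property applies verbatim and closes the argument; the remaining bookkeeping, namely that $u$ is listening and that the factor-of-two reduction of the modulus is valid, is routine.
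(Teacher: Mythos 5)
Your proof is correct and follows essentially the same route as the paper's: parity separates fast from slow transmissions, the congruence $t \equiv 2(l+3r) \pmod{6\ceil{\log_2 n}}$ together with the bound $|l_{u_1}-l_{u_2}|\le 2$ forces any two simultaneously fast-transmitting neighbors of a common node to share both level and rank, and the collision-freeness property of the GST closes the argument. Your concluding observation --- that the collision-freeness property only protects receivers of the sender's rank, so one must note that the intended receiver on a fast stretch shares its stretch-parent's rank --- is a point the paper's own proof glosses over (it asserts that two such transmitters ``do not share a neighbor in the next level,'' which the property does not literally give), and your version is the more careful one.
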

\fullOnly{
\begin{proof}
Since fast and slow transmission happen during even and odd rounds, respectively, it is clear that collisions can only happen between two slow or two fast transmissions. To see that two fast transmission do not collide, we note that in round $t$, only nodes with a level $l \equiv t/2 \pmod{3}$ have transmissions. This is because a fast transmission in round $t$ happens only if $t \equiv 2l + 6r \equiv 2l \pmod{6}$. Since nodes whose levels differ by at least $3$ can not share a neighbor, we get that collisions can only caused by transmissions of nodes within the same level. Furthermore, two nodes within the same level are only performing a fast transmission if their ranks $r$ and $r'$ are equivalent modulo $\ceil{\log_2 n}$. By definition of GST, this implies that their ranks are equal and the \emph{collision-freeness} property of GST then guarantees that two such nodes do not share a neighbor in the next level. This shows that there are indeed no collisions between any two fast transmissions.
\end{proof}
}

\begin{proposition}\label{lem:progfast}
If node $u$ with level $l$ is the beginning of a fast stretch in GST $\mathcal{T}$ and $u$ sends a message at time $t$ in a fast transmission round, then any node $v$ with level $l'>l$ on the same fast stretch receives this message by time $t' = t + 2(l'-l)$.
\end{proposition}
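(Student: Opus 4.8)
The plan is to prove the statement by induction along the fast stretch, advancing one BFS level (one hop) per step, and relying on exactly two ingredients: the arithmetic of case~(a) of the schedule, which makes the fast-transmission rounds of two consecutive stretch nodes differ by precisely $2$, and the collision-freeness of fast transmissions guaranteed by \Cref{lem:fastcollisionfree}. Write the stretch as $u=w_0,w_1,\dots,w_{l'-l}=v$, where $w_j$ sits at BFS level $l+j$. Because a fast stretch is a downward path in the ranked BFS tree $\mathcal{T}$, I first record the structural facts I will use repeatedly: each $w_{j+1}$ is a child of $w_j$ and hence a $G$-neighbor of it, the levels increase by exactly $1$ from $w_j$ to $w_{j+1}$, and all the $w_j$ carry the same rank $r$.

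The arithmetic backbone is the following. A node at level $\ell$ and rank $r$ fires a fast transmission in round $\tau$ iff $\tau\equiv 2(\ell+3r)\pmod{6\ceil{\log_2 n}}$, so raising the level by one while keeping the rank fixed shifts the whole set of fast rounds up by $2$. Concretely, since $u=w_0$ transmits at time $t$ we have $t\equiv 2(l+3r)$, and the fast rounds of $w_j$ are exactly the rounds $\equiv t+2j\pmod{6\ceil{\log_2 n}}$. I would then prove by induction on $j$ the statement ``$w_j$ holds the message and performs a fast transmission of it in round $t+2j$.'' The base case $j=0$ is immediate, as $u$ holds and transmits the message at time $t$ by hypothesis.

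For the inductive step I assume $w_j$ transmits the message in round $t+2j$ and argue that $w_{j+1}$ receives it there and re-fires two rounds later. Three checks are needed. First, $w_{j+1}$ must be listening in round $t+2j$: since $t$ is even this round is even, so no odd-round slow transmission is scheduled, and the offset-$2$ computation shows $w_{j+1}$'s own fast rounds are $\equiv t+2(j+1)$, so $t+2j$ is not one of them. Second, $w_{j+1}$ is a $G$-neighbor of $w_j$, already noted. Third, the reception must not be destroyed by a collision; here I invoke \Cref{lem:fastcollisionfree}: round $t+2j$ is even, hence carries only fast transmissions, and by that lemma no node ever hears two simultaneous fast transmissions, so $w_{j+1}$'s sole incoming fast transmission is the one from $w_j$ and is received cleanly. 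Finally, once $w_{j+1}$ holds the message its first fast round after $t+2j$ is $t+2(j+1)$, at which case~(a) deterministically makes it transmit, closing the induction.

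The only delicate point, which I expect to be the main (if mild) obstacle, is the synchronization bookkeeping: one must verify that each node receives the message strictly before its own firing round, so that the pipeline never stalls for a full $6\ceil{\log_2 n}$-round period. This is precisely what the offset-$2$ structure delivers --- the message reaches $w_{j+1}$ in the very round $w_j$ fires, which is two rounds earlier than $w_{j+1}$'s own firing round --- so information advances one level every two rounds without interruption. Unwinding the induction to $j=l'-l$ shows that $v=w_{l'-l}$ holds, and transmits, the message in round $t+2(l'-l)$, and in particular has received it by then, which is the claimed bound.
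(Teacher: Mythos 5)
Your proof is correct and is exactly the argument the paper intends: the paper states this proposition without proof, treating it as immediate from the offset-by-$2$ arithmetic of case (a) of the schedule together with the collision-freeness of fast transmissions (\Cref{lem:fastcollisionfree}), which are precisely the two ingredients your induction uses. Your careful checks (that $t$ is even so no slow transmission interferes, that $w_{j+1}$ is not itself firing in round $t+2j$, and that the sole incoming fast transmission is from $w_j$) correctly fill in the details the paper leaves implicit.
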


\begin{lemma}\label{lem:progslow}
For any node $u$ with virtual-distance $d_u$, if there is at least one node $v$ connected to $u$ in $G$ with virtual-distance $d_v=d_u-1$, then during each interval of $6\ceil{\log_2 n}$ rounds, with probability at least $\frac{1}{8}$, node $u$ receives a message from one node with virtual-distance $d_u-1$.
\end{lemma}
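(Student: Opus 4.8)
The plan is to reduce this to the standard Decay progress bound (\Cref{lem:decay}). The two preliminary ingredients I would establish are that virtual-distance is $1$-Lipschitz along edges of $G$, and that the modular structure of the slow-transmission schedule isolates, in the relevant rounds, exactly the neighbors of $u$ at virtual-distance $d_u - 1$. First I would record the structural fact that if $u$ and $w$ are adjacent in $G$, then $|d_u - d_w| \le 1$: the virtual graph $G'$ contains every edge of $G$ in addition to the fast edges, so traversing one edge of $G$ changes the shortest-path distance from $s$ in $G'$ by at most one. Hence every neighbor $w$ of $u$ has $d_w \in \{d_u - 1,\, d_u,\, d_u + 1\}$.

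Next I would analyze which transmissions can possibly reach $u$ in the rounds $t \equiv 2d_u - 1 \pmod 6$. A node at virtual-distance $d$ performs a slow transmission only when $t \equiv 1 + 2d \pmod 6$, so in these rounds the only slow transmitters adjacent to $u$ are neighbors at virtual-distance $\equiv d_u - 1 \pmod 3$. Combined with the Lipschitz bound $d_w \in \{d_u - 1,\, d_u,\, d_u + 1\}$, this forces $d_w = d_u - 1$ exactly, since the three candidate distances occupy distinct residues modulo $3$, hence modulo $6$ (their slow residues are $2d_u - 1,\, 2d_u + 1,\, 2d_u + 3$). These rounds are odd, so no fast transmission (which by \Cref{lem:fastcollisionfree} occurs only in even rounds) interferes, and $u$ itself is listening because its own slow residue is $2d_u + 1 \not\equiv 2d_u - 1 \pmod 6$. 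Thus in every round $t \equiv 2d_u - 1 \pmod 6$ the only transmissions in $u$'s neighborhood come from its virtual-distance-$(d_u - 1)$ neighbors, and $u$ receives a message precisely when exactly one of them transmits.

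Finally I would observe that within any window of $6\lceil \log_2 n \rceil$ rounds there are exactly $\lceil \log_2 n \rceil$ rounds with $t \equiv 2d_u - 1 \pmod 6$, and that along them the exponent $(t - 1 - 2(d_u - 1))/6 \bmod \lceil \log_2 n \rceil$ runs through all of $0, 1, \dots, \lceil \log_2 n \rceil - 1$. Restricted to these rounds, the virtual-distance-$(d_u - 1)$ neighbors of $u$ therefore execute exactly one phase of the Decay protocol, transmitting with probabilities $1, \tfrac{1}{2}, \dots, 2^{-(\lceil \log_2 n \rceil - 1)}$, with no other node in $u$'s neighborhood transmitting, and at least one such neighbor exists by hypothesis. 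Applying \Cref{lem:decay} to this phase yields that $u$ receives a message from one of these virtual-distance-$(d_u - 1)$ neighbors with probability at least $\tfrac{1}{8}$, as desired. Note that the MMV content distinction (noise versus the actual broadcast message) is irrelevant here, since the conclusion only asserts a collision-free reception.

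I expect the main obstacle to be the bookkeeping of the second step: one must verify simultaneously that the schedule rules out fast-transmission interference (via the even/odd parity of fast versus slow rounds), rules out slow-transmission interference from neighbors at virtual-distance $d_u$ or $d_u + 1$ (via the mod-$6$ residues together with the $1$-Lipschitz bound), and leaves $u$ itself listening. Once this isolation is in place, the remaining argument is exactly a single Decay phase and invokes \Cref{lem:decay} verbatim.
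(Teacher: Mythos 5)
Your proposal is correct and follows essentially the same route as the paper: isolate the rounds in which only the virtual-distance-$(d_u-1)$ neighbors of $u$ can transmit, then extract the $\tfrac{1}{8}$ bound from a Decay-style calculation. The only difference is presentational --- you make explicit the $1$-Lipschitz property of virtual-distance and the mod-$6$ bookkeeping that the paper's proof leaves implicit, and you invoke \Cref{lem:decay} as a black box where the paper redoes the single-round computation $\frac{1}{2x}(1-\frac{1}{x})^{x-1} > \frac{1}{8x}$ inline.
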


\begin{proof}
Let $x$ be the number of neighbors of $u$ with virtual-distance $d-1$. Note that within any span of $6 \ceil{\log_2 n}$ rounds there is a round in which all nodes in level $d-1$ send a message independently with probability $p$ between $\frac{1}{x}$ and $\frac{1}{2x}$ while all nodes with virtual-distance $d$ and $d+1$ (and thus also all other neighbors of $u$) are silent. The probability that $u$ receives a message from any particular neighbor in this round is at least $\frac{1}{2x} (1 - \frac{1}{x})^{x-1} > \frac{1}{8x}$. These events are mutually exclusive and we thus get that the total probability for at least one neighbor successfully transmitting to $u$ during this round is at least $\frac{1}{8}$.
\end{proof}

\smallskip

\begin{proof}[Proof of \Cref{lem:GST-MMV}]
For a large enough constant $\lambda$ let $T=\lambda(D + 2\ceil{\log_2 n} (\log n + \log \frac{1}{\delta}))$. We claim that for any node $v$, the probability that node $v$ does not receive the message in $T$ rounds is at most $\delta$. 

Fix an arbitrary node $v$. To prove the claim, we use \emph{backwards analysis} to view the process of dissemination of the message. In this method, we go back in time, from round $T$ to round $1$, and we find a sequence of collision-free transmissions from source node $s$ to node $v$. Since we are moving back in time, we find this sequence starting from $v$ and going backwards till reaching $s$. 

\smallskip
For each $t$, we say node $u$ is \emph{transmission-connected} to $v$ by backwards time $t$" if there is a sequence of transmissions $u=w_1, w_2, \dots w_\ell=v$ where for each $i\in [1, \ell-1]$, $w_i$ transmits in a round $r_i \in[T-t, T]$, we have $r_{i} < r_{i+1}$, and in round $r_i$, $w_{i+1}$ receives a message from $w_i$. Let $S_t$ be the set of all nodes that are transmission-connected to $v$ by backwards time $t$. Moreover, we then define the potential of $v$ at backwards time $t$ to be $\Phi(t)=\min_{u \in S_t} d_u \ceil{\log_2 n} + l_u$. Note that $\Phi(0) \leq 2\ceil{\log_2 n}^2+ D$. This is because the level of $v$ in $G$ is at most $D$, and the virtual-distance $d_u$ is at most $2 \ceil{\log_2 n}$. To prove the claim, we show that with probability at least $1-2^{-(\log\frac{1}{\delta} + 2 \log n)}$, we have $\Phi(T)=0$. For this, moving backwards in time, we show that in every $8\ceil{\log_2 n}$ interval of consecutive rounds, this potential decreases with probability at least $\frac{1}{16}$ by at least $\ceil{\log_2 n}-1$. For a backwards time $t$, let node $u$ be the node in $S_t$ that minimizes the potential of $v$. The proof is now divided into two cases as follows:

\medskip
\textbf{Case (A)}: Suppose $u$ has at least one $G$-neighbor that has a lower virtual-distance. In this case, \Cref{lem:progslow} guarantees that with probability at least $\frac{1}{8}$ during the rounds in $[T-t - 6 \ceil{\log_2 n}, T-t]$, there is a collision-free transmission from a node $u'$ with $d_{u'} = d_u - 1$ to $u$.
Since $u'$ and $u$ are neighbors their levels $l_u$ and $l_{u'}$ differ at most by one, thus a successful transmission decreases the potential by at least $(d_u \ceil{\log_2 n} + l_u) - (d_{u'} \ceil{\log_2 n} + l_{u'}) = (d_u - d_{u'})\ceil{\log_2 n} - (l_u - l_{u'}) \geq \ceil{\log_2 n} - 1$. Thus, if $u$ has a neighbor with a virtual-distance lower than $d_u$ then with probability at least $\frac{1}{16}$ the potential decreases by at least $\ceil{\log_2 n} - 1$ within any $8 \ceil{\log_2 n}$ rounds when moving backwards in time.

\medskip
\textbf{Case (B)}: Suppose $u$ does not have a $G$-neighbor with a lower virtual-distance. Note that this can only happen if $u=s$ or if there is one directed edge in $G'$ representing a fast stretch, originating from a node $u'$ one level below $u$ in $G'$ and going into $u$. First observe that the starting node of any fast stretch initiates a ``transmission wave'' every $6 \ceil{\log_2 n}$ rounds by creating a new coded packet and sending it as a fast transmission. This packet gets then pipe-lined through the fast stretch with one progress every fast transmission round (that is, once in every two rounds) until it reaches the end of the stretch. Thus, for any node on a fast stretch, there is a new wave arriving every $6\ceil{\log_2 n}$ rounds. Thus, at a time $t' \in [T-t-6 \ceil{\log_2 n}, T-t]$, a fast transmission wave arrives in $u$ and leads to an extended sequence of collision-free transmissions. In particular, if the wave originated from $u'$ during the rounds $[T-t'-2 \ceil{\log_2 n}, T-t']$, then there is a sequence of transmission from $u'$ to $v$ in round interval $[T-t - 8 \ceil{\log_2 n}, T-t]$, and otherwise the wave propagated for $\ceil{\log_2 n}$ steps and there is a node $u''$ between $u'$ and $u$ on the fast stretch with a sequence of transmission to $v$ starting at time $T-t - 8 \ceil{\log_2 n}$. Thus, in both cases, the potential drops by at least $\ceil{\log_2 n}-1$. In the first case the potential drop comes from the fact that $d_{u'} = d_u - 1$ and $l_{u'} < l_u$, while in the second case we have $d_{u''} \leq d_{u'} + 1 = d_u$ and $l_{u''} \leq l_{u} - \ceil{\log_2 n}$. 

\medskip
The above argument shows that when moving backwards in time, in every $8 \ceil{\log_2 n}$ consecutive rounds, with probability at least $\frac{1}{8}$, the potential of $v$ decreases by at least $\ceil{\log_2 n}-1 > \ceil{\log_2 n}/2$, until reaching zero. When the potential reaches zero, it means that there is a sequence of successful and collision-free transmission from $s$ to $v$. 
%
%
Hence, the expected time for such a sequence to appear is thus a constant times the initial potential of $v$, $\Phi_{\vec{\mu}}(0) \leq 2\ceil{\log_2 n}^2 +D$. A Chernoff bound furthermore shows that the probability of not finding such a sequence is exponentially concentrated around this mean. In particular, after $T=\lambda(D + 2\ceil{\log_2 n} (\log n + \log \frac{1}{\delta}))$ rounds, we expect at least $\lambda'(2D/\ceil{\log_2 n} + 4\ceil{\log_2 n} + 2\log{\frac{1}{\delta}})$ sets of $8 \ceil{\log_2 n}$ consecutive rounds in which the potential of $v$ drops at least by $\ceil{\log_2 n}/2$, for a constant $\lambda'$. Furthermore, the probability that there are less than $2D/\ceil{\log_2 n} + 4\ceil{\log_2 n}$ such rounds is exponentially small in the expectation, that is, at most $2^{-(2\ceil{\log_2 n} + \log{\frac{1}{\delta}})} < \delta/n$. A union bound over all choices of node $v$ then completes the proof.
\end{proof}

\subsection{Optimal Multi-Message Broadcast Algorithms}\label{subsec:optMMBA}

We achieve our optimal multi-message broadcast algorithms by combining random linear network coding with the Multi-Message GST Schedule that we presented in \Cref{sec:schedule}. In \Cref{sec:RLNC} we first recall on the exact working of random linear network coding and in \Cref{sec:RLNCintegration} we explain how to integrate it with our MMV GST Schedule. In \Cref{subsec:analysisMMBA} we combine the analysis technique from \cite{Haeupler11} with the proof that our schedule is MMV to obtain \Cref{thm:multipleBcastKnown}, i.e., our multi-message result for the unknown topology setting. In \Cref{subsec:MMB-Unknown} we then discuss how this algorithm can be extended to the unknown topology setting to obtain \Cref{thm:multipleBcastUnknown}.

\subsubsection{Random Linear Network Coding}\label{sec:RLNC}

In random linear network coding~\cite{HKMKE} the $k$ messages are regarded as bit-vectors $\vec {m_1}$, $\ldots \,$, $\vec {m_k} \in \Ftwo^l$ over $\Ftwo$, the finite field of order two. Instead of putting one message in plaintext into a packet nodes transmitt coded packets. Each network coded packet $p$ consists of a linear combination of messages, that is, the vector $\sum_{i=1}^k \alpha_i \vec {m_i} \in \Ftwo^l$. One should think of the coefficient vector $\vec \alpha = (\alpha_1, \ldots, \alpha_k) \in \Ftwo^k$ being transmitted with each message\footnote{In many applications the size of a message is large compared to the $k$ bit coefficient vector which allows sending the coefficient vector with each message with negligible overhead. In our setting increase the packet size to $k$ bits could be too large. Fortunately, the overhead coming from the coefficient vector can be avoided: In the known topology setting there is actually no need for including the coefficient vectors in the packets because using the topology knowledge, all nodes can compute the coefficients offline in a consistent manner. In the unknown topology scenario, using generations, that is, dividing messages into groups of size $\log n$ and then doing network coding only inside each group keeps the coefficient overhead to $O(\log n)$ bits, which is negligible even in our stringent setting (see \Cref{subsec:MMB-Unknown}).}. 

Because of linearity, a node that has a number of these packets can create a packet of this form for any coefficient combination that is spanned by the coefficient vectors of the packets that it has received by that time. Also, if a node has a set of $k$ packets with linearly independent coefficient vectors, then this node can reconstruct all the $k$ messages using Gaussian elimination. In RLNC, every node $u$ stores all its received packets to maintain the subspace that is spanned by them. Whenever $u$ decides to generate a \emph{new coded packet}, it chooses a random coefficient vector from this subspace by taking a random linear combination of the packets stored. Once the subspace spanned by the coefficient vectors in packets received by $u$ is the full space $\Ftwo^k$, then $u$ decodes and reconstructs all the messages.

\subsubsection{Combining the MMV GST Schedule with Random Linear Network Coding}\label{sec:RLNCintegration}

It is now easy to combine random linear network coding with our new GST Schedule:

\medskip\smallskip

\noindent\fbox{
\parbox{0.95\linewidth}{

\smallskip
\textbf{Multi-Message Broadcast Algorithm}: Whenever in MMV schedule of \Cref{sec:schedule}, a node $u$ is prompted to transmit, $u$ transmits a packet determined as follows: (a) if this is a slow transmission, or if this is a fast transmission and $u$ is the first node on a fast stretch, then $u$ transmits a new coded packet, that is, a packet that is created using network coding by combining the messages $u$ has received earlier, (b) if this is a fast transmission but node $u$ is an intermediate node in a fast stretch, then $u$ simply relays the packet it received in the previous fast transmission round (if any). 
}}

\subsubsection{Analyzing the Multi-Message Broadcast Algorithm}\label{subsec:analysisMMBA}

In this section we prove \Cref{thm:multipleBcastKnown} by analyzing the performance of the \emph{multi-message broadcast algorithm} presented in \Cref{subsec:optMMBA}. The analysis combines the proof for the MMV property of the new GST Schedule with the projection analysis from \cite{Haeupler11}. 


\smallskip

The following definition and proposition are taken from \cite{Haeupler11} and form a simple and clean platform for analyzing random linear network coding:


\begin{definition}[{\cite[Definition 4.1]{Haeupler11}}]\label{def:infection}
A node $v$ is \emph{infected} by a coefficient vector $\vec{\mu} \in \Ftwo^k$ if $v$ has received a packet with a coefficient vector $\vec{c} \in \Ftwo^k$ that is not orthogonal to $\mu$, that is, $\left\langle \vec{\mu},\vec{c} \right\rangle \neq 0$.
\end{definition}

\begin{proposition}[{\cite[Lemma 4.2]{Haeupler11}}]\label{prop:NCfacts}
If a node $v$ is infected by a coefficient vector $\vec{\mu}$ and after that, a node $u$ receives a packet from node $v$, then $u$ gets infected by $\vec{\mu}$ with probability at least $1/2$. Furthermore, if a node $v$ is infected by all the $2^k$ coefficient vectors in $\Ftwo^k$, then $v$ can decode all the $k$ messages.
\end{proposition}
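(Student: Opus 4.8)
The plan is to prove the two assertions separately, each reducing to elementary $\Ftwo$-linear algebra applied to the subspace of coefficient vectors a node has seen. Throughout, for a node $w$ let $V_w \subseteq \Ftwo^k$ denote the linear span of all coefficient vectors that $w$ has received so far (the subspace $w$ maintains in the RLNC description). By \Cref{def:infection}, $w$ is infected by $\vec\mu$ exactly when the $\Ftwo$-linear functional $\phi_{\vec\mu}\colon \vec c \mapsto \langle \vec\mu, \vec c\rangle$ does not vanish identically on $V_w$, i.e.\ when $\vec\mu \notin V_w^{\perp}$. So I would first restate both claims in this language: the first becomes a statement about a uniform random draw from $V_v$, and the second becomes the claim that being infected by every nonzero $\vec\mu$ forces $V_v^{\perp} = \{\vec 0\}$.

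For the first part, the key observation is that the packet $v$ transmits is a newly generated coded packet, i.e.\ a uniformly random linear combination of its stored packets, so the transmitted coefficient vector $\vec c'$ is distributed uniformly over $V_v$. This uses only that a surjective $\Ftwo$-linear map pushes the uniform distribution forward to the uniform distribution, since all its fibers have equal size; in particular any linear dependencies among the stored vectors are harmless and I would not assume independence. Because $v$ is infected by $\vec\mu$, the functional $\phi_{\vec\mu}$ is nonzero on $V_v$, hence surjective onto $\Ftwo$, so its kernel has index exactly $2$ in $V_v$. Therefore $\Pr[\langle\vec\mu,\vec c'\rangle \neq 0] = 1/2$, and whenever this event occurs the vector $\vec c'$ that $u$ receives witnesses that $u$ is now infected by $\vec\mu$; this yields probability at least $1/2$.

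For the second part, I would argue by contraposition using the nondegeneracy of the standard dot product on $\Ftwo^k$. If $V_v \neq \Ftwo^k$ then $\dim V_v < k$, and since the rank--nullity relation $\dim V_v + \dim V_v^{\perp} = k$ holds for this nondegenerate form, $V_v^{\perp}$ contains a nonzero vector $\vec\mu$; for this $\vec\mu$ node $v$ is not infected, contradicting the hypothesis. (Infection by $\vec 0$ is vacuously impossible, so ``all $2^k$ vectors'' is read as all nonzero ones.) Hence $V_v = \Ftwo^k$, so $v$'s received coefficient vectors span all of $\Ftwo^k$; choosing $k$ independent ones and running Gaussian elimination recovers $\vec m_1, \dots, \vec m_k$, exactly as stated in the RLNC description.

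All the computations here are routine; the only points deserving care are justifying that the newly coded packet's coefficient vector is genuinely uniform over $V_v$ (where I invoke the equal-fiber-size argument rather than assume the stored vectors form a basis) and invoking nondegeneracy of the $\Ftwo$ dot product so that $\dim V_v + \dim V_v^{\perp} = k$ is available even over a field where a subspace can meet its own orthogonal complement. I expect no substantial obstacle beyond these bookkeeping issues, since the statement is a clean restatement of \cite[Lemma 4.2]{Haeupler11}.
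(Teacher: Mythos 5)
Your proof is correct, and it is worth noting that the paper itself offers no proof of this proposition at all: it is imported verbatim as a black box from \cite[Lemma 4.2]{Haeupler11}, so there is no in-paper argument to compare against. Your linear-algebra derivation is essentially the standard one underlying the cited lemma. Both halves are sound: for the first, the observation that a uniformly random $\Ftwo$-linear combination of the stored packets pushes forward to the uniform distribution on the span $V_v$ (via equal fiber sizes, without assuming the stored vectors are independent), together with the fact that the kernel of the nonzero functional $\vec c \mapsto \left\langle \vec{\mu}, \vec c\right\rangle$ has index exactly $2$ in $V_v$, gives probability exactly $1/2$; for the second, using $\dim V_v + \dim V_v^{\perp} = k$ for the nondegenerate dot product (correctly sidestepping the fact that over $\Ftwo$ a subspace may meet its own orthogonal complement) shows $V_v = \Ftwo^k$, and Gaussian elimination finishes. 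Your reading of ``all $2^k$ vectors'' as all nonzero ones is also the right one, since no vector is ever infected by $\vec 0$. One small caveat about scope rather than correctness: your argument for the first claim assumes $v$ transmits a \emph{freshly generated} coded packet. In the paper's actual schedule (\Cref{sec:RLNCintegration}), intermediate nodes on a fast stretch merely relay a previously received packet, so the $1/2$ bound there is charged once to the node that created the packet at the head of the stretch and then propagated deterministically; this is exactly how the proof of \Cref{thm:multipleBcastKnown} uses the proposition in Case (B), so your proof covers the case in which the proposition is actually invoked.
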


With these tools we can proceed to prove \Cref{thm:multipleBcastKnown}:

\begin{proof}[Proof of \Cref{thm:multipleBcastKnown}]
For a large enough constant $\lambda$ let $T=\lambda(D + k \ceil{\log_2 n} + 2\ceil{\log_2 n}^2)$ . We claim that for any node $v$ and any fixed non-zero vector $\vec{\mu} \in  \Ftwo^k$, the probability that node $v$ is not infected by $\vec{\mu}$ in $T$ rounds is at most $2^{-(k + 2 \log n)}$. The proof of this claim is almost identical to the proof of \Cref{lem:GST-MMV}, except that we are want a failure probability $\delta=O(2^{-k})$ and also, we must consider whether each transmission is successful with respect to $\vec{\mu}$ or not. For completeness, we repeat the proof with all details, starting with the next paragraph. Once we have the claim proven, we can conclude via a union bound over all the $2^k$ coefficient vectors in $\Ftwo^k$ that by round  $T$, with high probability, $v$ is infected by all the coefficient vectors in $\Ftwo^k$. That is, by round $T$, $v$ can decode all the $k$ messages. Using another union bound over all the choices of node $v$ then we get that, with high probability, all nodes have received all the messages by round $T$.

Fix a node $v$ and a non-zero vector $\vec{\mu} \in  \Ftwo^k$. To prove the claim, we use \emph{backwards analysis} to view the process of infection spreading of vector $\vec{\mu}$. In this method, we go back in time, from round $T$ to round $1$, and we find a sequence of collision-free transmissions from source node $s$ to node $v$ such that all the transmissions in this chain are successful \emph{with respect to} vector $\vec{\mu}$. Since we are moving back in time, we find this sequence starting from $v$ and going backwards till reaching $s$. 

\smallskip
For each $t$, we say node $u$ is \emph{transmission-connected} to $v$ by backwards time $t$" if there is a sequence of transmissions $u=w_1, w_2, \dots w_\ell=v$ where for each $i\in [1, \ell-1]$, $w_i$ transmits in a round $r_i \in[T-t, T]$, we have $r_{i} < r_{i+1}$, and in round $r_i$, $w_{i+1}$ receives a message from $w_i$. Let $S_t$ be the set of all nodes that are transmission-connected to $v$ by backwards time $t$. Moreover, we then define the potential of $v$ with respect to vector $\vec{\mu}$ at backwards time $t$ to be $\Phi_{\vec{\mu}}(t)=\min_{u \in S_t} d_u \ceil{\log_2 n} + l_u$. Note that $\Phi_{\vec{\mu}}(0) \leq 2\ceil{\log_2 n}^2+ D$.\fullOnly{This is because the level of $v$ in $G$ is at most $D$, and the virtual-distance $d_u$ is at most $2 \ceil{\log_2 n}$.} To prove the claim, we show that with probability at least $1-2^{-(k + 2 \log n)}$, we have $\Phi_{\vec{\mu}}(T)=0$. For this, moving backwards in time, we show that in every $8\ceil{\log_2 n}$ interval of consecutive rounds, this potential decreases with probability at least $\frac{1}{16}$ by at least $\ceil{\log_2 n}-1$. For a backwards time $t$, let node $u$ be the node in $S_t$ that minimizes the potential of $v$. The proof is now divided into two cases as follows:

\medskip
\textbf{Case (A)}: Suppose $u$ has at least one $G$-neighbor that has a lower virtual-distance. In this case, \Cref{lem:progslow} guarantees that with probability at least $\frac{1}{8}$ during the rounds in $[T-t - 6 \ceil{\log_2 n}, T-t]$, there is a collision-free transmission from a node $u'$ with $d_{u'} = d_u - 1$ to $u$, and is successful with respect to $\vec{\mu}$, with probability $1/2$.
Since $u'$ and $u$ are neighbors their levels $l_u$ and $l_{u'}$ differ at most by one, thus a successful transmission decreases the potential by at least $(d_u \ceil{\log_2 n} + l_u) - (d_{u'} \ceil{\log_2 n} + l_{u'}) = (d_u - d_{u'})\ceil{\log_2 n} - (l_u - l_{u'}) \geq \ceil{\log_2 n} - 1$. Thus, if $u$ has a neighbor with a virtual-distance lower than $d_u$ then with probability at least $\frac{1}{16}$ the potential decreases by at least $\ceil{\log_2 n} - 1$ within any $8 \ceil{\log_2 n}$ rounds when moving backwards in time.

\medskip
\textbf{Case (B)}: Suppose $u$ does not have a $G$-neighbor with a lower virtual-distance. Note that this can only happen if $u=s$ or if there is one directed edge in $G'$ representing a fast stretch, originating from a node $u'$ one level below $u$ in $G'$ and going into $u$. First observe that the starting node of any fast stretch initiates a ``transmission wave'' every $6 \ceil{\log_2 n}$ rounds by creating a new coded packet and sending it as a fast transmission. This packet gets then pipe-lined through the fast stretch with one progress every fast transmission round (that is, once in every two rounds) until it reaches the end of the stretch. Thus, for any node on a fast stretch, there is a new wave arriving every $6\ceil{\log_2 n}$ rounds. Moreover, each of these waves is successful with respect to $\vec{\mu}$ with probability at least $1/2$. Thus, at a time $t' \in [T-t-6 \ceil{\log_2 n}, T-t]$, a fast transmission wave arrives in $u$, and with probability $1/2$ leads to an extended sequence of collision-free transmissions that are successful with respect to $\vec{\mu}$. In particular, if the wave originated from $u'$ during the rounds $[T-t'-2 \ceil{\log_2 n}, T-t']$, then there is a sequence of transmission from $u'$ to $v$ in round interval $[T-t - 8 \ceil{\log_2 n}, T-t]$, and otherwise the wave propagated for $\ceil{\log_2 n}$ steps and there is a node $u''$ between $u'$ and $u$ on the fast stretch with a sequence of transmission to $v$ starting at time $T-t - 8 \ceil{\log_2 n}$. Thus, in both cases, the potential drops by at least $\ceil{\log_2 n}-1$. In the first case the potential drop comes from the fact that $d_{u'} = d_u - 1$ and $l_{u'} < l_u$, while in the second case we have $d_{u''} \leq d_{u'} + 1 = d_u$ and $l_{u''} \leq l_{u} - \ceil{\log_2 n}$. 

\medskip
The above argument shows that when moving backwards in time, in every $8 \ceil{\log_2 n}$ consecutive rounds, with probability at least $\frac{1}{16}$, the potential of $v$ decreases by at least $\ceil{\log_2 n}-1 > \ceil{\log_2 n}/2$, until reaching zero. When the potential reaches zero, it means that there is a sequence of successful and collision-free transmission from $s$ to $v$. 
%
%
Hence, the expected time for such a sequence to appear is thus a constant times the initial potential of $v$, $\Phi_{\vec{\mu}}(0) \leq 2\ceil{\log_2 n}^2 +D$. A Chernoff bound furthermore shows that the probability of not finding such a sequence is exponentially concentrated around this mean. In particular, after $T=\lambda (D + k \ceil{\log_2 n}+ 2\ceil{\log_2 n})$ rounds, we expect at least $\lambda'(2D/\ceil{\log_2 n} + 4\ceil{\log_2 n} + k)$ sets of $8 \ceil{\log_2 n}$ consecutive rounds in which the potential of $v$ drops at least by $\ceil{\log_2 n}/2$, for a constant $\lambda'$. Furthermore, the probability that there are less than $(2D/\ceil{\log_2 n} + 4\ceil{\log_2 n}$ such rounds is exponentially small in the expectation, that is, at most $2^{-(2\ceil{\log_2 n} + k)}$. This completes the proof of \Cref{thm:multipleBcastKnown}
\end{proof}

\fullOnly{
\subsection{Extending the Multi-Message Broadcast to the Unkown Topology Setting}\label{subsec:MMB-Unknown}
To achieve \Cref{thm:multipleBcastUnknown}, the key idea is to combine the multi-message broadcast of known topology presented in Sections \ref{sec:schedule} and \ref{subsec:optMMBA} with the idea presented in \Cref{subsec:single-message}, that is, decomposing the graph into rings of width $D'=\frac{D}{\log^4 n}$ layers around the source node using collision detection and then creating one GST for each ring. Here, we present the smaller details that are needed for filling out this outline, to get \Cref{thm:multipleBcastUnknown}.

Recall that our multi-message broadcast algorithm works on top of a GST of graph $G$. In \Cref{sec:GST}, we presented an $O(D\log^4 n)$ distributed GST construction for the unknown topology setting. Refer to \Cref{subsec:GSTdef} for definition of GST and what nodes need to learn in a distributed GST construction. We will use this distributed construction again. However, we first need to enhance it by adding one more element to what nodes learn about GST: In the multi-message broadcast schedule that we presented in \Cref{sec:schedule}, each node $u$ also needs to know the virtual-distance $d_u$ which indicates the directed distance from source $s$ to node $u$ in the virtual graph $G'$ (refer to \Cref{sec:schedule} for definition of $G'$ and the virtual-distance). In the setting with known topology, GST $\mathcal{T}$ and the respective virtual-distance $d_u$ are computed by each node locally without any need for communication between the nodes. In the next lemma, we show that nodes can easily learn these virtual-distances in the unknown topology setting, without changing the asymptotic time complexity of the GST construction. 

\begin{lemma}\label{lem:GSTenhanced} In the radio networks (even without collision detection), there exists a distributed algorithm that, in $O(D \log^4 n)$ rounds, constructs a GST and moreover, each node $u$ also learns its virtual-distance $d_u$ from the source.
\end{lemma}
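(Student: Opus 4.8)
The plan is to treat the GST construction itself as a black box: first run the $O(D\log^4 n)$-round algorithm of \Cref{thm:GSTconst}, after which every node $u$ knows $\ell(u)$, $r(u)$, its parent and the parent's rank, and hence (by the footnote in \Cref{subsec:GSTdef}) whether it is the first node of a fast stretch and whether its parent lies on the same stretch. I would then append a second phase that computes the virtual-distances $d_u$ by a distributed breadth-first search in the virtual graph $G'$, sweeping over the distance values $0,1,2,\dots$ in increasing order. The whole reason this is cheap is \Cref{lem:virtualDistance}: every virtual-distance is at most $2\ceil{\log_2 n}$, so the BFS needs only $O(\log n)$ ``virtual layers''.

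The invariant I maintain is that after layer $i$ every node with $d_u=i$ knows this value, starting from layer $0$ where only $s$ is labelled (with $d_s=0$). Each edge of $G'$ is either an ordinary $G$-edge or a fast edge, and I relax the two kinds separately within each layer. For the ordinary edges I run $\Theta(\log n)$ phases of the Decay protocol (\Cref{lem:decay}), in which every node already labelled $i-1$ broadcasts its label and every still-unlabelled node that hears such a broadcast adopts label $i$; by \Cref{lem:decay} each unlabelled node having a labelled neighbour succeeds with high probability, at a cost of $O(\log^2 n)$ rounds. For the fast edges I exploit that a fast edge runs from the start $a$ of a stretch to every node of that stretch: each stretch-start that was just labelled $i-1$ launches a fast-transmission wave carrying its label, and by \Cref{lem:progfast} this wave reaches every node $u$ of the stretch within $2(\ell(u)-\ell(a))\le 2D$ rounds, whereupon any still-unlabelled stretch node adopts label $i$. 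Since these are fast transmissions, \Cref{lem:fastcollisionfree} lets the waves of all stretches run simultaneously without collisions, so this step costs $O(D+\log n)$ rounds per layer.

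Correctness is just the correctness of BFS: a node keeps the first (hence smallest) label it ever receives, so its final label equals the length of the shortest directed $s$-to-$u$ path in $G'$, i.e.\ $d_u$. Two points make the layered sweep go through. First, a stretch-start only ever receives its label through an ordinary edge (the rank-drop edge from the stretch above it, or another $G$-neighbour), since the only fast edges incident to it emanate \emph{from} it; thus waves are always triggered from correctly-labelled starts, and once a start is labelled $i-1$ its entire stretch is labelled (with value $\le i$) during layer $i$, so no fast edge is left unrelaxed. Second, since I allot $\Theta(D+\log^2 n)$ rounds per layer, at least the $2D$ wave-traversal time plus the Decay time, every distance-$i$ node (including those reached late by a wave) knows its label before the distance-$(i+1)$ relaxation begins. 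Summing over the $2\ceil{\log_2 n}+1=O(\log n)$ layers gives $O(D\log n+\log^3 n)$ extra rounds, which is absorbed into the $O(D\log^4 n)$ bound of \Cref{thm:GSTconst}; a union bound over the $O(\log n)$ Decay invocations and the $n$ nodes yields the high-probability guarantee.

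The hard part will not be the asymptotics but making the two relaxation mechanisms cooperate correctly. I must argue that the true shortest virtual path is captured by taking, at each node, the minimum of (best ordinary-edge predecessor $+1$) and (fast-stretch-start label $+1$); this rests on the structural fact, implicit in the proof of \Cref{lem:virtualDistance}, that every non-source node of $G'$ has an in-neighbour of one of these two types at virtual-distance exactly one smaller. The other delicate point is timing: because a fast wave delivers labels to deep stretch nodes only after up to $2D$ rounds, I have to budget each virtual layer generously enough that no distance-$i$ node is still uninformed when the distance-$(i+1)$ relaxation starts, as otherwise the layered BFS could skip nodes and produce inconsistent labels.
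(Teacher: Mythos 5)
Your proposal is correct and follows essentially the same route as the paper's proof: run the GST construction of \Cref{thm:GSTconst} as a black box, then compute the virtual-distances by a layered BFS over the $O(\log n)$ possible values guaranteed by \Cref{lem:virtualDistance}, relaying the label $d+1$ down fast stretches via pipelined fast transmissions and across ordinary $G$-edges via $\Theta(\log n)$ Decay phases. The only (valid) difference is that you run the stretch waves of all ranks in parallel, appealing to \Cref{lem:fastcollisionfree}, for $O(D\log n + \log^3 n)$ extra rounds, whereas the paper processes the ranks sequentially at $2D$ rounds each for $O(D\log^2 n + \log^3 n)$; both are absorbed into the $O(D\log^4 n)$ bound.
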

\begin{proof} First, we construct a GST in $O(D\log^4 n)$ rounds using the construction of \Cref{thm:GSTconst}. We now explain that in $O(D\log^2 n + \log^3 n)$ further rounds, nodes can compute the virtual-distance labels\footnote{Even though faster solutions for this step are possible, since the time complexity will be dominated by that of the GST construction, we only present the slightly less-efficient but cleaner $O(D\log^2 n + \log^3 n)$ solution}. 

Recall from \Cref{lem:virtualDistance} that for each node $u$, we know that $d_u \in [1, 2\ceil{\log n}]$. We compute the virtual-distances in a recursive manner based on the value of $d_u$: Consider a $d \in [1, 2\ceil{2\log n}-1]$ and suppose that all the nodes $u$ that have a distance label $d_u \leq d$ have already learned their distance $d_u$. We explain how to identify the nodes $u$ that have $d_u=d+1$, in $O(D \log n + \log^2 n)$ rounds. 

Let $S_d$ be the set of nodes $u$ that have received virtual-distance label $d_u=d$. Moreover, let $F_d \subseteq S_d$ be the set of nodes in $S_d$ that are the first nodes in a fast stretch. Recall from \Cref{subsec:GSTdef} that since in construction of GST, each node $u$ knows its own rank and the rank of its parent $v$, node $u$ knows whether $u$ is the first node in a fast stretch or its parent $v$ is in the same fast stretch as well. We divide the $O(D \log n+ \log^2 n)$ rounds of recursion of virtual-distance $d+1$ into two stages, with respectively $O(D \log n)$ and $O(\log^2 n)$ rounds, as follows:

In the first stage, we identify all the nodes that are on the fast stretches starting at nodes of $F_d$, and we give all of them virtual-distance label $d+1$. In order to this, we divide this stage between the $\ceil{\log_2 n}$ possible rank values and spend $2D$ rounds on each rank. That is, we first in $2D$ rounds solve the problem for fast stretches of rank $1$ nodes, then in $2D$ rounds solve the problem fast stretches of rank $2$ nodes, etc. For each rank $r\in [1, \ceil{\log^2 n}]$, we spend $2D$ rounds, in two epochs each made of $D$ rounds, as follows: 

The $D$ rounds of the first epoch are as follows: in the $\ell^{th}$ round, each node that is in $F_d$, has rank $r$, and BFS-layer $\ell$ transmits. Each node $u$ that has not received a virtual-distance label before, has BFS-layer $\ell+1$, rank $r$, and receives a message from its parents gets virtual-distance $d_u=d+1$. These $D$ rounds identify the second nodes (those next to the first nodes) in fast stretches of rank $r$, which must receive virtual-distance $d+1$.

The $D$ rounds of the second epoch are as follows: for each $\ell \in [1, D-1]$, if $\ell=1$, then let $S^*$ be the set of nodes that received virtual-distance label $d+1$ in the first epoch, and if $\ell\geq 1$, then let $S^*$ be the set of nodes that received virtual-distance label $d+1$ in the $(\ell-1)^{th}$ round of the second epoch. Then, in the $\ell^{th}$ round, each node $u$ that has not received a virtual-distance label before, has BFS-layer $\ell+1$, rank $r$, and receives a message from its parent gets virtual-distance $d_u=d+1$. 

Note that because of collision-freeness property of GST, all the nodes of fast-stretches of rank $r$ that start in a node in $F_d$ will be identified and will receive distance label $d+1$. After performing the above two epochs for all the ranks $r\in \ceil{\log n}$, we are done with the first stage. Note that the first stage thus takes $D\log n$ rounds, $2D$ rounds for each rank $r\in \ceil{\log n}$.

The second stage is as follows: All nodes in $S_d$ perform $\Theta(\log n)$ phases of the Decay protocol for a total of $\Theta(\log^2 n)$ rounds. Each node $u$ that has not received a virtual-distance label before but receives a message in these rounds sets its virtual-distance label $d_u= d+1$. 
\end{proof}

Now we use this enhanced distributed GST construction to get a multi-message algorithm for the unknown topology with collision detection. 

\begin{proof}[Proof of \Cref{thm:multipleBcastUnknown}] As in the proof of \Cref{thm:singlebcast}, we first use a wave of collisions to get a BFS-layering of the graph. We decompose the graph into rings, each consisting of $D'= \frac{D}{\log^4}$ consecutive BFS-layers, centered around the source node~\footnote{In fact, if $D=O(\log^6)$, then just one ring and thus just one GST  is enough.}. Then, we use the enhanced GST construction presented in \Cref{lem:GSTenhanced} to construct a GST (with the addition of nodes knowing their virtual-distance labels) for each ring, all in time $O(D'\log^4 n) =O(D)$ rounds, by parallelizing the constructions of different rings. 

Suppose that we are done with the construction of the GSTs of the rings. First, let us assume that the coefficient vectors of linear network coding, which consist of at most $k$ bits, fit inside one packet; we later explain how to reduce this overhead to $O(\log n)$.

Let $k' = \frac{D}{\log^3 n}$. Divide the messages into batches, each consisting of at most $k'$ messages. Inside each ring, we can broadcast one batch of messages in $O(D' + k'\log n+ \log^2 n) = O(\frac{D}{\log^4} + \log^2 n)$ rounds, simply using the algorithm of \Cref{subsec:optMMBA} on top of the GST of this ring. To deliver a batch of messages from one ring to another, we simply use forward error correction (FEC)\footnote{Here, FEC can be viewed as a simplified form of network coding as there is no intermediate node in this scenario. That is, the nodes on the outer boundary of one ring transmit and the nodes on the inner boundary of the next ring receive.}. Consider the outer boundary of the $j^{th}$ ring and the inner boundary of the $(j+1)^{th}$ ring, and consider a batch of messages that is already delivered to all nodes in the outer boundary of the $j^{th}$ ring. Then, each of these outer boundary nodes creates $\Theta(k')$ packets using an FEC code such that if a node $w$ receives $\Theta(k')$ of these packets, then $w$ can decode all the $k'$ messages of the batch in consideration. To deliver these FEC coded packets, we use $k'$ phases of the Decay protocol, where the nodes in the outer boundary of the $j^{th}$ ring transmit. It follows from \Cref{lem:decay} and a simple Chernoff bound that after $k' = \Omega(\log n)$ phases of the Decay protocol, each node on the inner boundary of the $(j+1)^{th}$ ring has with high probability received at least $\Theta(k')$ FEC coded packets related to the batch in consideration. Thus, these inner boundary nodes of the $j^{th}$ ring can decode all the messages of this batch. Hence, we conclude that in time $O(D' + k'\log n+ \log^2 n) + O(k'\log n) = O(\frac{D}{\log^4 n} + \log^2 n)$, with high probability, one batches of messages moves from the inner boundary of the $j^{th}$ ring to the inner boundary of the $(j+1)^{th}$ ring. That is, in each $O(\frac{D}{\log^4 n})$, one batch of messages moves one ring forward. 

Having the above, it is enough to pipeline the batches of messages over the rings. That is, the first batch starts in the first ring, and moves one ring forward, in each epoch made of $O(\frac{D}{\log^4} + \log^2 n)$ rounds. When the first batch is in the third ring (and is starting to be broadcasted there), the first ring starts working on the second batch. Note that at each time, nodes in each ring work on at most one batch. This way, the first batch arrives at the end of the last ring by the end of round $O(\frac{D}{\log^4} + \log^2 n) \cdot \log^4 n = O(D + \log^6 n)$. Moreover, after that, in every interval of $O(\frac{D}{\log^4} + \log^2 n)$ consecutive rounds, one new batch arrives at the end of the last ring. Since there are $\frac{k}{k'}$ batches, we get that we are done with the broadcast of all messages by the end of round  $O(D + \log^6 n)+ (\frac{k}{k'}) \cdot O(\frac{D}{\log^4} + \log^2 n) = O(D + \log^6 n)+ (\frac{k\log^3}{D'}) \cdot O(\frac{D}{\log^4} + \log^2 n) = O(D + k\log n + \log^6 n)$.

Lastly, we explain how to reduces the overhead coming from including the coefficient vector into RLNC coded packets from $k$ bits to $O(\log n)$ bits. This is done by grouping all packets into batches of $O(\log n)$ messages and only coding together messages within a batch. This happens only in the transmissions within a ring leaving the process of broadcasting the messages between the boundaries of two consequent rings the same as above, which was fine as the coding overhead of FEC is only a constant. 


Inside each ring, we do the following: Consider the $j^{th}$ ring, for a $j \in [1, \Theta(\log^4)]$, and the GST of that ring. For each node $u$ in this ring, define height of $u$ as $h_u= d_u \ceil{\log_2 n} + l_u$, where $d_u$ is the virtual-distance of $u$ in this ring and $l_u$ is the (normalized) BFS layers of $u$ for this ring (that is, the BFS layer of $u$ in the BFS layering of original graph $G$ minus $j \cdot D'$). Note that this definition of height exactly matches the potential function defined in the proof of \Cref{thm:multipleBcastKnown}. Moreover, note that for each node $u$, we have $h_u \leq 2\ceil{\log n}^2+ D' = O(D'+2\log^2 n)$. Fix $W=\Theta(\log^2 n)$. Based on the height, we decompose the $j^{th}$ ring into \emph{strips} as follows: all nodes $u$ in the $j^{th}$ ring that have $h_u \in [(j'-1) \cdot W, j' \cdot W]$ are in the strip number $j'$.  

Now, to reduce the header overhead caused by coding to $O(\log n)$, instead of dividing the messages into batches of $k' = \frac{D}{\log^3 n}$, we divide them into smaller batches each consisting of $k''=\Theta(\log n)$ messages. Thus, the RLNC coefficient vectors of each batch are $\Theta(\log n)$ bits and hence, fit inside one packet for any packet size $B=\Omega(\log n)$. Now we use the transmission schedule of \Cref{sec:schedule} but with coding the packets only inside one batch and one strip. That is, we run the schedule of \Cref{sec:schedule} in steps consisting of $\Theta(\log^2 n)$ rounds. If a node has not received all the messages of one batch at the end of one step, then it ignores all the packets it received in this step (that is, it empties its buffer) and restarts in the next step. Following the proof of \Cref{thm:multipleBcastKnown}, we see that in each step of $\Theta(\log^2 n)$ rounds, each batch moves one strip forward, with high probability. That is, for each particular batch, in each $\Theta(\log^2 n)$ rounds, the \emph{height} of the nodes that have received all the messages of this batch increases by at least $\Theta(\log^2 n)$, with high probability. Since the maximum height in the ring is $O(D'+2\log^2 n)$, we get that in $O(D'+2\log^2 n)$ rounds, the first batch moves from the start of the ring to the end of the ring. After this, in each $\Theta(\log^2 n)$ further rounds, another batch of messages arrives at the end layer of the ring. From the above, by combining with the pipe-lining argument between different rings, we get that the very first batch reaches the outer boundary of the last ring after $O(D+\log^6 n)$ rounds. After that, in each $\Theta(\log^2 n)$ rounds, one new batch made of $\Theta(\log n)$ messages arrives at the outer boundary of the last ring. Hence, after $O(D+k\log n+ \log^6 n)$ rounds, all batches are broadcast to all nodes of the graph.
\end{proof}

}

\bibliographystyle{acm}
\bibliography{Bdata}
\end{document}